\title{\LARGE \bf
A Phase-space Formulation of the Belavkin-Kushner-Stratonovich
Filtering Equation for Nonlinear Quantum Stochastic Systems$^*$
}
\author{Igor G. Vladimirov$^{\dagger}$%: \today, \currenttime%\\
%\today, \currenttime% <-this % stops a space
\thanks{$^*$This work is supported by the Australian Research Council.}
\thanks{$^\dagger$UNSW Canberra, Australia.
{\tt igor.g.vladimirov@gmail.com}.}
}
\newtheorem{thm}{Theorem}
\def\fa{\mathfrak{a}}
\def\fb{\mathfrak{b}}
\def\<{\leqslant}           % nice less than or equal to sign
\def\>{\geqslant}           % nice larger than or equal to sign
\def\div{\mathrm{div}}         % divergence
\def\d{\partial}
\def\Re{\mathrm{Re}}   % real part
\def\Im{\mathrm{Im}}   % imaginary part
\def\cH{\mathcal{H}}   % Hardy space
\def\mR{\mathbb{R}}    % real line
\def\mC{\mathbb{C}}    % complex plane
\def\Tr{\mathrm{Tr}}       % matrix trace
\def\rT{\mathrm{T}}        % matrix transpose
\def\bE{\mathbf{E}}    % expectation
\def\[[[{[\![\![}
\def\]]]{]\!]\!]}
\def\bra{\langle}
\def\ket{\rangle}
\def\Bra{\left\langle}
\def\Ket{\right\rangle}
\def\re{\mathrm{e}}        % number e
\def\rd{\mathrm{d}}        % differential
\def\cL{\mathcal{L}}
\def\x{\times}
\def\ox{\otimes}
\def\fF{\mathfrak{F}}
\def\fG{\mathfrak{G}}
\def\cZ{\mathcal{Z}}
\def\fZ{\mathfrak{Z}}
\def\sM{\mathsf{M}}
\def\sK{\mathsf{K}}
\def\cF{\mathcal{F}}
\def\cW{\mathcal{W}}
\def\cC{\mathcal{C}}
\def\cI{\mathcal{I}}
\def\cA{\mathcal{A}}
\def\cB{\mathcal{B}}
\def\cS{\mathcal{S}}
\def\mS{\mathbb{S}}
\def\Ups{\Upsilon}
\def\ups{\upsilon}
\def\diag{\mathop{\mathrm{diag}}}    % diagonal matrix
\begin{document}
\maketitle
\thispagestyle{empty}
\pagestyle{plain}

%\subjclass[2000]{Primary }
%    For articles to be published after 1 January 2010, you may use
%    the following version:
%\subjclass[2010]{Primary:
%81S22, % Open systems, reduced dynamics, master equations, decoherence
%81S25, % Quantum stochastic calculus
%81S30, % Phase-space methods including Wigner distributions, etc.
%81P16, % Quantum state spaces, operational and probabilistic concepts
%81S05; %Canonical quantization, commutation relations and statistics
%secondary:
%81Q15, % Perturbation theories for operators and differential equations
%35Q40, %PDEs in connection with quantum mechanics
%37M25. % Computational methods for ergodic theory (approximation of invariant measures, computation of Lyapunov exponents, entropy
%}

%\keywords{
%Open quantum systems,
%canonical commutation relations,
%quantum stochastic differential equations,
%Weyl quantization,
%Wigner-Moyal approach,
%quasi-characteric functions,
%quasi-probability density functions,
%integro-differential equations,
%Renyi entropy.
%}
%
%\date{ \today, \currenttime}

\begin{abstract}
This paper is concerned with a filtering problem for a class of nonlinear quantum stochastic systems with multichannel nondemolition measurements. The system-observation dynamics are governed by a Markovian  Hudson-Parthasarathy  quantum stochastic differential equation driven by quantum Wiener processes of bosonic fields in vacuum state. The Hamiltonian and system-field coupling operators, as functions of the system variables,   are represented in a Weyl quantization form.  Using the Wigner-Moyal phase-space framework, we obtain a stochastic integro-differential equation for the posterior quasi-characteristic function (QCF) of the system conditioned on the measurements. This equation is a spatial Fourier domain representation of the Belavkin-Kushner-Stratonovich stochastic master equation driven by the innovation process associated with the measurements. We also discuss a more specific form of the posterior QCF dynamics in the case of linear system-field coupling and outline a Gaussian approximation of the posterior quantum state.
\end{abstract}

\maketitle

\thispagestyle{empty}

%%%%%%%%%%%%%%%%%%%%%%%%%%%%%%%%%%%%%%%%%%%%%%%%%%%%%%%%%%%%%%%%%%%%%%%%%%%%%%%%%%%%%%%%%%%%%%%%%%%

%\pagestyle{plain}

%%%%%%%%%%%%%%%%%%%%%%%%%%%%%%%%%%%%%%%%%%%%%%%%%%%%%%%%%%%%%%%%%%%%%%%%%%%%%%%%%%%%%%%%%%%%%%%%%%%
%%%%%%%%%%%%%%%%%%%%%%%%%%%%%%%%%%%%%%%%%%%%%%%%%%%%%%%%%%%%%%%%%%%%%%%%%%%%%%%%%%%%%%%%%%%%%%%%%%%

%%%%%%%%%%%%%%%%%%%%%%%%%%%%%%%%%%%%%%%%%%%%%%%%%%%%%%%%%%%%%%%%%%%%%%%%%%%%%%%%%%%%%%%%%%%%%%%%%%%
\section{INTRODUCTION}\label{sec:intro}
%%%%%%%%%%%%%%%%%%%%%%%%%%%%%%%%%%%%%%%%%%%%%%%%%%%%%%%%%%%%%%%%%%%%%%%%%%%%%%%%%%%%%%%%%%%%%%%%%%%

Estimation of the current state of a dynamical system, based on the past history of a statistically dependent random process, is a central problem in the stochastic filtering theory which dates back to the works of Kolmogorov and Wiener of the 1940s \cite{K_1941,W_1949}. The performance of a state estimator is often described in terms of mean square values of the estimation errors which have  to be minimized.
The filtering problems arise naturally when the system state (whose knowledge, precise or approximate, is required for feedback control) is not accessible to direct measurement. Such measurements are particularly problematic in regard to physical systems on the atomic scales, whose evolution is described in terms of operator-valued variables and obeys the laws of quantum mechanics which prohibit simultaneous projective measurements  of noncommuting quantities \cite{H_2001,M_1998,S_1994}. The incompatibility of quantum variables and the nature of measurement (as an interaction with a macroscopic apparatus which affects the quantum  system) impose restrictions on information which can be retrieved without disturbing the system.

Nevertheless, for a class of open quantum systems which are weakly coupled to external electromagnetic fields, the  measurement of the output fields can be arranged in a nondemolition manner \cite{B_1989} so that, at any moment of time, the past observations commute between themselves and with any future system operator. In this case, the quantum measurements are, in many respects, similar to classical observations and, in fact, can be regarded as classical random processes \cite{KS_1991} on a common probability space. The statistical dependence on the system variables,  which results from the system-field interaction, allows such observations to be used for continuously updating the conditional density operator of the quantum system according to the stochastic master equation (SME) \cite{GZ_2004,WM_2010}. This posterior density operator and its modifications plays the role of an information state in measurement-based quantum control  problems \cite{B_1983,DDJW_2006,EB_2005,GBS_2005}. %,J_2005,YB_2009}.
The SME is a quantum analogue of the Kushner-Stratonovich equation \cite{LS_2001} for the evolution of the posterior probability density function (PDF) of the system variables in the case of classical system-observation dynamics described by stochastic differential equations (SDEs). Similarly to its classical counterpart, the SME is a recursive implementation of  the Bayesian inference. % \cite{LS_2001}.
Accordingly, the quantum Belavkin-Kushner-Stratonovich equation (BKSE) \cite{BVJ_2007,EWP_2015,GK_2010,N_2014},  which governs the dynamics of the conditional expectations of system operators, %(and is, therefore, dual to the SME),
is developed in the framework of the Hudson-Parthasarathy calculus of quantum stochastic differential equations (QSDEs) \cite{HP_1984,P_1992}.

In the QSDE model of open quantum systems, the external input bosonic fields are accommodated by (and have their own quantum state on) a symmetric Fock space \cite{PS_1972}. These fields   are represented by quantum Wiener processes on the Fock space which drive the system variables according to the energetics of the system and its interaction with the fields. The latter is specified by the system Hamiltonian and the system-field coupling operators which are functions of the system variables.  However, classical functions of several real or complex variables can be extended to the noncommutative quantum variables in different ways. One of such extensions is provided by the Weyl functional calculus \cite{F_1989} which employs unitary Weyl operators whose role in this context is similar to that of the spatial harmonics in the Fourier transform.

The Weyl quantization is used in the Wigner-Moyal phase-space method \cite{Hi_2010,M_1949} of quasi-probability density functions  (QPDFs) which are the Fourier transforms of the quasi-characteristic functions (QCFs) \cite{CH_1971}, with the latter being the quantum expectations of the Weyl operators. The phase-space approach allows the quantum dynamics to be represented without the ``burden of the Hilbert space'' and leads to partial differential and integro-differential equations for the QPDFs and QCFs, which involve only real or complex variables and encode the moments of the system operators. Although the Moyal equations  \cite{M_1949} for the QPDF dynamics were originally obtained for closed systems, the phase-space approach  has also extensions to different classes of open quantum systems; see, for example, \cite{GRS_2014,GRS_2015,KS_2008,MD_2015,V_2015c}.

In the present paper, the phase-space approach is applied to the filtering problem for a class of nonlinear quantum stochastic systems with multichannel field measurements satisfying the nondemolition conditions. The system variables satisfy the Weyl canonical commutation relations (CCRs) and are governed by a Markovian  QSDE driven by the quantum Wiener processes of bosonic fields in vacuum state. Using the Weyl quantization of the Hamiltonian and system-field coupling operators in combination with the results of \cite{N_2014} and \cite{V_2015c},  we obtain a stochastic integro-differential equation (SIDE) for the evolution of the  posterior QCF  of the system conditioned on the measurements. This equation is a spatial frequency domain representation of the BKSE driven by the innovation process associated with the measurements. We also discuss a more specific form of the posterior QCF and QPDF dynamics for a class \cite{SVP_2014,V_2015c}  of open quantum systems whose coupling operators are linear functions of the system variables while the Hamiltonian is split into a quadratic part and a nonquadratic part represented in the Weyl quantization form. For this linear system-field coupling case, we outline  modified quantum Kalman filter equations  for a Gaussian approximation of the posterior system state. We also mention that the Weyl quantization of the Hamiltonian and coupling operators has recently been used in \cite{V_2015a,V_2015b} in a different context of optimality conditions for coherent (measurement-free) quantum control and filtering problems.

The paper is organized as follows. Section~\ref{sec:sys} describes the class of quantum stochastic systems under consideration. Section~\ref{sec:meas} specifies the model of nondemolition measurements and describes the BKSE for conditional expectations. Section~\ref{sec:SIDE} applies this equation to the Weyl operators and obtains the posterior QCF dynamics in the Weyl quantization framework. Section~\ref{sec:hlin} specifies these results, together with a related equation for the posterior QPDF, for the case of linear system-field coupling. Section~\ref{sec:gauss} develops modified quantum Kalman filter equations for a Gaussian approximation of the posterior quantum state.
Section~\ref{sec:conc} provides concluding remarks.

%%%%%%%%%%%%%%%%%%%%%%%%%%%%%%%%%%%%%%%%%%%%%%%%%%%%%%%%%%%%%%%%%%%%%%%%%%%%%%%%%%%%%%%%%%%%%%%%%%%
\section{QUANTUM STOCHASTIC SYSTEMS BEING CONSIDERED}
\label{sec:sys}
%%%%%%%%%%%%%%%%%%%%%%%%%%%%%%%%%%%%%%%%%%%%%%%%%%%%%%%%%%%%%%%%%%%%%%%%%%%%%%%%%%%%%%%%%%%%%%%%%%%

We consider an open quantum system, whose internal dynamics are affected by interaction with external fields and are described in terms of an even number $n$ of dynamic variables $X_1,\ldots, X_n$ assembled into a vector $X:= (X_k)_{1\< k\< n}$ (vectors are assumed to be organized as columns). These   system variables are time-varying self-adjoint operators on a complex separable Hilbert space $\cH$ satisfying the CCRs
\begin{equation}
\label{CCR}
    \cW_{u+v} = \re^{i u^{\rT}\Theta v} \cW_u \cW_v
\end{equation}
for all $ u,v\in \mR^n$ in terms of the unitary Weyl operators \cite{F_1989}
\begin{equation}
\label{cW}
  \cW_u := \re^{iu^{\rT} X} = \cW_{-u}^{\dagger},
\end{equation}
where $(\cdot)^{\dagger}$ is the operator adjoint. Here, $\Theta$ is a nonsingular real antisymmetric matrix which specifies the matrix of commutators
$
    [X, X^{\rT}]
    :=    ([X_j,X_k])_{1\< j,k\< n}
     =
     2i \Theta
$
as an infinitesimal form of the Weyl CCRs (\ref{CCR}) (the transpose $(\cdot)^{\rT}$ acts on matrices of operators as if their entries were scalars).

For example, in the case when the system variables consist of conjugate pairs of the quantum mechanical position and momentum operators \cite{M_1998} comprising the $\frac{n}{2}$-dimensional vectors $q$ and $p:= -i\d_q$  (with the reduced Planck constant set to $\hslash = 1$), so that
\begin{equation}
\label{Xqp}
    X:= {\begin{bmatrix}q\\ p\end{bmatrix}},
\end{equation}
the CCR matrix takes the form
$    \Theta = \frac{1}{2}    {\small\begin{bmatrix}
        0 & 1\\
        -1 & 0
    \end{bmatrix}}\ox I_{n/2}
$
and corresponds to the symplectic structure matrix in classical Hamiltonian systems (here, $\ox$ denotes the Kronecker product of matrices, and $I_r$ is the identity matrix of order $r$).

The evolution of the vector $X$ of system variables is governed by a Markovian Hudson-Parthasarathy QSDE with the identity scattering matrix \cite{HP_1984,P_1992}
\begin{equation}
\label{dX}
    \rd X =
    \cL(X)\rd t  - i[X,h^{\rT}]\rd W
\end{equation}
whose structure is described below (the time arguments are often omitted for brevity). Although it resembles classical SDEs \cite{KS_1991}, the QSDE (\ref{dX}) is driven by a vector $W:= (W_k)_{1\< k \< m}$ of an even number $m$ of self-adjoint quantum Wiener processes $W_1, \ldots, W_m$ acting on a symmetric Fock space $\cF$. These  represent the external bosonic fields \cite{H_1991,P_1992} and have the quantum Ito table
\begin{equation}
\label{Omega}
    \rd W\rd W^{\rT} = \Omega \rd t,
    \quad
        \Omega
    :=
    I_m + iJ,
    \quad
        J :=
    {\begin{bmatrix}0 & 1\\ -1 & 0\end{bmatrix}} \ox I_{m/2}.
\end{equation}
In contrast to the diffusion matrix of the standard Wiener process, $\Omega := (\omega_{jk})_{1\< j,k\< m}$ is  a complex positive semi-definite Hermitian matrix with a nonzero imaginary part   $\Im \Omega=J$, whereby the quantum Wiener processes  $W_1, \ldots, W_m$ do not commute with each other:
\begin{equation}
\label{WWst}
    [W(s), W(t)^{\rT}] = 2i\min(s,t)J ,
    \qquad s,t\>0.
\end{equation}
Furthermore, the $n$-dimensional
drift vector
$    \cL(X)
$
and the dispersion $(n\x m)$-matrix $- i[X,h^{\rT}]$ of the QSDE (\ref{dX}) are specified by the system Hamiltonian $h_0$ and the vector   $
    h:=
    (h_k)_{1\<k\<m}$ of
system-field coupling operators $h_1, \ldots, h_m$, which are self-adjoint operators on $\cH$ representable as functions of the system variables.
The superoperator $\cL$ is the Gorini-Kossakowski-Sudar\-shan-Lindblad (GKSL) generator \cite{GKS_1976,L_1976} which acts on a system operator $\xi$ (and applies entrywise to vectors) as
\begin{equation}
\label{cL}
   \cL(\xi):= i[h_0,\xi] +    \frac{1}{2}
    \sum_{j,k=1}^m
    \omega_{jk}
    \big( [h_j,\xi]h_k + h_j[\xi,h_k]\big).
\end{equation}
The specific structure of the QSDE (\ref{dX}) comes from  the system-field interaction which drives a unitary operator $U(t)$ acting on the system-field tensor-product space $\cH:= \cH_0\ox \cF$ (with $\cH_0$ the initial space for the action of the system variables at time $t=0$):
\begin{equation}\label{dU}
    \rd U = -U \Big(i(h_0\rd t + h^{\rT} \rd W) + \frac{1}{2}h^{\rT}\Omega h\rd t\Big),
\end{equation}
where $U(0)=\cI_{\cH}$  is the identity operator on $\cH$. The unitary operator $U(t)$, which  depends on the system-field interaction over the time interval from $0$ to $t$, is adapted in the sense that  it acts effectively on the subspace $\cH_0\ox \cF_t$, where $\{\cF_t:\, t\>0\}$ is the Fock space filtration.   The corresponding quantum stochastic flow evolves the system variables as
\begin{equation}
\label{uni}
    X(t)
    =
    U(t)^{\dagger} (X(0)\ox \cI_{\cF}) U(t),
\end{equation}
whence the QSDE (\ref{dX}) is obtained by using (\ref{dU}) and the quantum Ito formula \cite{HP_1984,P_1992} in combination with (\ref{Omega}) and commutativity between the forward Ito increments $\rd W(t)$ and adapted processes taken at time $s\< t$. Adapted processes $\xi$, which are functions  of the system variables, satisfy QSDEs of the same form
\begin{equation}
\label{dxi}
    \rd \xi
    =
    \cL(\xi)\rd t - i[\xi,h^{\rT} ]\rd W.
\end{equation}

%%%%%%%%%%%%%%%%%%%%%%%%%%%%%%%%%%%%%%%%%%%%%%%%%%%%%%%%%%%%%%%%%%%%%%%%%%%%%%%%%%%%%%%%%%%%%%%%%%%
\section{OUTPUT FIELDS, NONDEMOLITION MEASUREMENTS AND CONDITIONING}\label{sec:meas}
%%%%%%%%%%%%%%%%%%%%%%%%%%%%%%%%%%%%%%%%%%%%%%%%%%%%%%%%%%%%%%%%%%%%%%%%%%%%%%%%%%%%%%%%%%%%%%%%%%%

A more conventional representation of the quantum Wiener process $W$ is in terms of
the field annihilation  $\fa_1, \ldots, \fa_{m/2}$ and creation $\fa_1^{\dagger}, \ldots, \fa_{m/2}^{\dagger}$  processes \cite{HP_1984,P_1992}, assembled into vectors $\fa:= (\fa_k)_{1\<k\<m/2}$ and $\fa^{\#}:= (\fa_k^{\dagger})_{1\<k\<m/2}$:
\begin{equation}
\label{Wfa}
    W
    :=
    2
    {\begin{bmatrix}
        \Re \fa\\
        \Im \fa
    \end{bmatrix}}
    =
    \left(
    {\begin{bmatrix}
        1 & 1\\
        -i & i
    \end{bmatrix}}
    \ox I_{m/2}
    \right)
    {\begin{bmatrix}
        \fa \\
        \fa^{\#}
    \end{bmatrix}}.
\end{equation}
Here, $(\cdot)^{\#}$ denotes the entrywise operator adjoint, and the real and imaginary parts %of a complex matrix
are extended to matrices $M$ with operator-valued entries as $\Re M = \frac{1}{2}(M+M^{\#})$ and $\Im M = \frac{1}{2i}(M-M^{\#})$ which consist of self-adjoint operators. In accordance with (\ref{Omega}),
the quantum Ito table of the annihilation and creation processes is given by
$
    \rd {\small\begin{bmatrix}\fa \\ \fa^{\#}\end{bmatrix}}
    \rd {\small\begin{bmatrix}\fa^{\dagger} & \fa^{\rT}\end{bmatrix}}
%    :=
%    \begin{bmatrix}
%        \rd\fa \rd \fa^{\dagger} & \rd \fa \rd \fa^{\rT}\\
%        \rd\fa^{\#} \rd \fa^{\dagger} & \rd \fa^{\#} \rd \fa^{\rT}
%    \end{bmatrix}
    =
    {\small\begin{bmatrix}
        1 & 0\\
        0 & 0
    \end{bmatrix}}
    \ox I_{m/2}
    \rd t
$,
where $(\cdot)^{\dagger}:= ((\cdot)^{\#})^{\rT}$ denotes the transpose of the entrywise adjoint and reduces to the complex conjugate transpose $(\cdot)^*:= (\overline{(\cdot)})^{\rT}$ for complex matrices. Also, the term
$
    ih^{\rT}\rd W = L^{\dagger}\rd \fa - L^{\rT}\fa^{\#}
$,
which is part of the diffusion term in (\ref{dU}),
is related to a different vector $L:=(L_k)_{1\<k\<m/2}$ of (not necessarily self-adjoint) coupling operators $L_1, \ldots, L_{m/2}$, so that
\begin{equation}
\label{hL}
    h =
        -J
        {\begin{bmatrix}
          \Re L \\
          \Im L
        \end{bmatrix}},
        \qquad
        {\begin{bmatrix}
          \Re L \\
          \Im L
        \end{bmatrix}}
        =
        Jh.
\end{equation}
Here, use is also made of the identity $J^2 = -I_m$ following from (\ref{Omega}).
The relations (\ref{Wfa}) and (\ref{hL}) make it possible to move between two alternative representations of the external fields and the system-field coupling operators.
As a result of the joint system-field evolution described by the unitary operator $U(t)$ from (\ref{dU}), the output field $Y:= (Y_k)_{1\<k\<m}$ is given by
\begin{equation}
\label{Y}
    Y(t) =
    2
    {\begin{bmatrix}
        \Re \fb(t)\\
        \Im \fb(t)
    \end{bmatrix}}
    =
 U(t)^{\dagger}(\cI_{\cH_0}\ox W(t))U(t)
\end{equation}
and satisfies the QSDE
\begin{equation}
\label{dY}
  \rd Y = 2Jh\rd t + \rd W.
\end{equation}
Here, $\fb:= (\fb_k)_{1\<k\<m/2}$ and $\fb^{\#}:= (\fb_k^{\dagger})_{1\<k\<m/2}$ are vectors of the corresponding output annihilation and creation operators:
\begin{equation}
\label{fb}
    \fb   = U^{\dagger}(\cI_{\cH_0}\ox \fa)U,
    \qquad
    \fb^{\#}   = U^{\dagger}(\cI_{\cH_0}\ox \fa^{\#})U.
\end{equation}
In view of (\ref{Wfa}), (\ref{hL}) and (\ref{dY}), the processes $\fb$ and $\fb^{\#}$ satisfy
the QSDEs
$  \rd \fb = L\rd t + \rd \fa$ and
$
    \rd \fb^{\#} = L^{\#}\rd t + \rd \fa^{\#}
$,
the second of which is obtained by the conjugation of the first one.
The unitary evolution in (\ref{uni}) and (\ref{Y}) preserves the commutativity between the system and output field variables in the sense that
\begin{equation}
\label{XY}
        [X(t),Y(s)^{\rT}]
     =
    0,
    \qquad
    t\> s\> 0.
\end{equation}
However, since
\begin{equation}
\label{YY}
    [\rd Y, \rd Y^{\rT}]
    =
    [\rd W, \rd W^{\rT}] = 2iJ\rd t
\end{equation}
and $[Y(s), Y(t)^{\rT}] = 2i\min(s,t)J$ for all $s,t\>0$ in view of (\ref{dY}) and (\ref{WWst}), the output fields $Y_1, \ldots, Y_m$ do not commute with each other and are not accessible to simultaneous measurement.

Therefore, following \cite{N_2014},  we will consider an  $r$-channel field $Z$ which is related to $\fb$ and $\fb^{\#}$ from (\ref{fb}) and $Y$ from (\ref{Y}) by
\begin{equation}
\label{Z}
    Z = \overline{G}\fb + G\fb^{\#} = 2\Re(\overline{G}\fb) = F Y,
    \quad\
    F :=
    \begin{bmatrix}
      \Re G  & \Im G
    \end{bmatrix}.\!\!\!
\end{equation}
Here, it is assumed that $r\< \frac{m}{2}$, and the matrix $F\in \mR^{r\x m}$, which is specified by $G\in \mC^{r\x m/2}$,  satisfies the conditions
\begin{equation}
\label{FF}
  FF^{\rT} \succ 0,
  \qquad
  FJF^{\rT} = 0,
\end{equation}
the first of which is equivalent to $F$ being of full row rank.
  In view of (\ref{YY}),  the second condition in (\ref{FF}) implies that $[\rd Z, \rd Z^{\rT}] = 2iF JF^{\rT}\rd t=0$,  which makes the quantum process $Z$ in (\ref{Z})  self-commuting and allows for simultaneous continuous measurements of its entries $Z_1, \ldots, Z_r$. Furthermore, $Z$ can be regarded (up to an isomorphism) as a classical  diffusion process \cite{KS_1991} with values in  $\mR^r$ and a positive definite diffusion matrix $FF^{\rT}$. Also, $Z$ inherits from $Y$ the property (\ref{XY}) since
$
    [X(t),Z(s)^{\rT}] = [X(t),Y(s)^{\rT}] F^{\rT} = 0
$ for all
$
    t\> s\>0
$.
Hence, for any time $t\>0$, the past measurement history
\begin{equation}
\label{Zt}
    \fZ_t:= \left\{Z_1(s), \ldots, Z_r(s):\ 0\<s\< t\right\}
\end{equation}
and any given system operator $\xi(t):=f(X(t))$ (an operator-valued extension of a  complex-valued function to the system variables) form a set of pairwise commuting (and hence, compatible) quantum variables. This makes the process $Z$ in (\ref{Z}) and (\ref{FF}) a legitimate model of nondemolition measurements.
In what follows, we will use the conditional quantum expectation
\begin{equation}
\label{pi}
    \pi_t(\xi)
    :=
    \bE(\xi(t)\mid \cZ_t)
\end{equation}
of a system operator $\xi$ at time $t\>0$ with respect to the commutative von Neumann algebra $\cZ_t$ generated by the past measurement history $\fZ_t$  from (\ref{Zt}). This is a mean square optimal estimator of $\xi(t)$   in the sense that $\pi_t(\xi)$ is an element of the measurement algebra $\cZ_t$ which delivers the minimum
\begin{equation}
\label{Emin}
    \min_{\eta \in \cZ_t}
    \bE((\xi(t)-\eta)^{\dagger}(\xi(t)-\eta)).
\end{equation}
This characterization is
similar  to the variational property of classical conditional expectations (of square integrable random variables)  with respect to $\sigma$-subalgebras \cite{LS_2001}.
The quantum expectation $\bE \zeta = \Tr(\rho \zeta)$ in (\ref{Emin}) is over the system-field density operator $\rho := \varpi \ox \ups$, where $\varpi$ is the initial quantum state of the system, and $\ups$ is the vacuum state \cite{P_1992} of the input fields.

According to \cite[Theorem 9]{N_2014}, the conditional expectation of a given system operator $\xi$ in (\ref{pi}) with respect to the nondemolition measurements (\ref{Z}) satisfies the BKSE
\begin{equation}
\label{dpi}
    \rd \pi_t(\xi)
    =
    \pi_t(\cL(\xi))
    \rd t
    +
    \beta^{\rT}
    K
    \rd \chi,
\end{equation}
which is driven by an innovation process $\chi$ (a martingale with respect to the measurement filtration) with the Ito differential
\begin{equation}
\label{dchi}
    \rd \chi
    =
    \rd Z - 2FF^{\rT}K^{\rT}\pi_t(\Re M)\rd t
\end{equation}
and diffusion matrix $FF^{\rT}$.
Here, the conditional expectation $\pi_t$ is evaluated at vectors of system operators entrywise, and
\begin{equation}
\label{beta}
    \beta
    :=
        \pi_t(M^{\#} \xi + \xi M)
        -2
        \pi_t(\xi)
        \pi_t(\Re M).
\end{equation}
The drift term of the SDE (\ref{dpi}) comes from $\xi$ having dynamics of its own in (\ref{dxi}).  The diffusion term $    \beta^{\rT}
    K
    \rd \chi$ represents the measurement-driven corrections of the prior estimate and, together with (\ref{dchi}) and (\ref{beta}), involves additional quantities \cite{N_2014}. More precisely,
\begin{equation}
\label{ML}
    M:= E^{-\rT}L
\end{equation}
(with $E^{-\rT}:= (E^{-1})^{\rT}$)
is a modified vector of coupling operators, where $E\in \mC^{\frac{m}{2}\x\frac{m}{2}} $ is a nonsingular matrix  obtained by augmenting the matrix $G$ from (\ref{Z}) as
\begin{equation}
\label{E}
  E:= {\begin{bmatrix}
        G \\
        D
      \end{bmatrix}}.
\end{equation}
The matrix $D \in \mC^{(\frac{m}{2}-r)\x\frac{m}{2}} $ is found so as to satisfy the condition
\begin{equation}
\label{EE}
        \begin{bmatrix}
      \Re E  & \Im E
    \end{bmatrix}J
    {\begin{bmatrix}
      \Re E^{\rT}  \\ \Im E^{\rT}
    \end{bmatrix}} = 0.
\end{equation}
In view of (\ref{E}), the leading diagonal block of order $\frac{m}{2}$ in the matrix on the left-hand side of (\ref{EE}) is $FJF^{\rT}$ which vanishes due to the second condition in (\ref{FF}). The matrix $K \in \mR^{\frac{m}{2}\x r}$ in (\ref{dpi}) and (\ref{dchi}) is computed according to \cite[Lemma 6, Proposition 8, Remark 11]{N_2014} as
\begin{equation}
\label{K}
    K:=
    \begin{bmatrix}
      I_r \\
      (\Re D \Re G^{\rT} + \Im D\Im G^{\rT})(FF^{\rT})^{-1}
    \end{bmatrix}.
\end{equation}
Whereas the SDE (\ref{dpi}) follows the Heisenberg picture  of quantum dynamics, its dual Schr\"{o}dinger picture version,  known as the SME \cite{WM_2010},  describes the evolution of the posterior density operator. The latter is a quantum counterpart of the classical conditional probability distribution which is continuously updated over the course of measurements according to the Bayes rule. However, in contrast to the classical case, the SME approach carries the ``burden of the Hilbert space''.

At the same time, the SDE (\ref{dpi}) is not algebraically closed, in general,  since its right-hand side involves other conditional moments which are not necessarily reducible to $\pi_t(\xi)$. The desired closure can be achieved within an appropriate parametric family of system operators $\xi$.
Such family is provided, for example, by the Weyl operators (\ref{cW}) in the Weyl quantization framework, which is considered in the next section.

%%%%%%%%%%%%%%%%%%%%%%%%%%%%%%%%%%%%%%%%%%%%%%%%%%%%%%%%%%%%%%%%%%%%%%%%%%%%%%%%%%%%%%%%%%%%%%%%%%%
\section{EVOLUTION OF THE POSTERIOR QUASI-CHARACTERISTIC FUNCTION}\label{sec:SIDE}
%%%%%%%%%%%%%%%%%%%%%%%%%%%%%%%%%%%%%%%%%%%%%%%%%%%%%%%%%%%%%%%%%%%%%%%%%%%%%%%%%%%%%%%%%%%%%%%%%%%

Application of the conditional expectation (\ref{pi}) to the Weyl operator $\cW_u$, associated with the system variables by (\ref{cW}), leads to the posterior QCF
\begin{equation}
\label{Phi}
    \Phi(t,u):= \pi_t(\cW_u) = \overline{\Phi(t,-u)},
    \qquad
    t\>0,\
    u \in \mR^n,
\end{equation}
which is a complex-valued function.
Here, the second equality  describes the Hermitian property of $\Phi(t,u)$ with respect to its spatial argument $u$ and follows from the second equality in (\ref{cW}), while $\cW_0 = \cI_{\cH}$ implies that $\Phi(t,0)=1$. However, in view of the Weyl CCRs (\ref{CCR}), the Bochner-Khinchin positiveness criterion \cite{GS_2004}  for the characteristic functions of classical probability distributions is replaced with its quantum mechanical weighted version \cite{CH_1971,H_2010}: the complex Hermitian matrix $\big(\re^{iu_j^{\rT}\Theta u_k} \Phi(t,u_j-u_k)\big)_{1\<j,k\<\ell}$ is positive semi-definite for any number of arbitrary points $u_1, \ldots, u_{\ell} \in \mR^n$.

The spatial Fourier transform of (\ref{Phi}) yields a real-valued posterior QPDF
\begin{equation}
\label{mho}
    \mho(t,x):= (2\pi)^{-n}\int_{\mR^n} \Phi(t,u)\re^{-iu^{\rT}x}\rd u,
    \qquad
    x \in \mR^n.
\end{equation}
Although the function $\mho$ is not necessarily nonnegative everywhere \cite{H_1974} (since, as mentioned above, the QCF $\Phi$ does not have to be positive semi-definite), it satisfies the normalization condition $    \int_{\mR^n}\mho(t,x)\rd x = \Phi(t,0)=1$ and is a quantum analogue of the classical posterior PDF. In particular, $\Phi$ and $\mho$  encode  information on conditional moments of the system variables $\pi_t(X_{k_1}\x \ldots \x X_{k_{\ell}})$ for any $1\< k_1,\ldots, k_{\ell}\< n$, provided $\Phi(t,u)$ is $\ell$ times continuously differentiable  with respect to $u\in \mR^n$.  Up to an isomorphism,  both functions $\Phi(t,\cdot)$ and $\mho(t,\cdot)$ are $\cZ_t$-adapted classical random fields \cite{GS_2004} on $\mR^n$ on a common probability space.

Since the posterior QCF (\ref{Phi}) is the conditional expectation of the Weyl operators, the BKSE (\ref{dpi}) applies to this case too. Moreover, this leads to an algebraically  closed equation for the time evolution of the posterior QCF in the framework of the Weyl quantization model for the energy operators of the system.
To this end, following \cite{V_2015c}, we assume that the system Hamiltonian $h_0$ and the system-field coupling operators $h_1, \ldots, h_m$ in (\ref{cL}) are obtained by the Weyl quantization  \cite{F_1989} of real-valued functions on $\mR^n$ with the Fourier transforms $H_k: \mR^n\to \mC$ as
\begin{equation}
\label{hk}
    h_k := \int_{\mR^n} H_k(u)\cW_u \rd u,
    \qquad
    k = 0,1,\ldots, m,
\end{equation}
where $\cW_u$ is the Weyl operator (\ref{cW}).
The vector $h$ of the system-field coupling operators is related to the vector-valued map $H:= (H_k)_{1\< k\<m}: \mR^n \to \mC^m$ by
\begin{equation}\label{h}
    h = \int_{\mR^n} H(u)\cW_u\rd u.
\end{equation}
The operators $h_k$ in (\ref{hk})   are self-adjoint due to the second equality in (\ref{cW}) and the Hermitian property of the functions $H_k$ as the Fourier transforms of real-valued functions
 (that is, $H_k(-u) = \overline{H_k(u)}$ for all $u\in \mR^n$).

If the function $H_k$ is absolutely integrable (that is, $\int_{\mR^n} |H_k(u)|\rd u<+\infty$), then (\ref{hk}) can be understood as a Bochner integral  \cite{Y_1980} which yields a bounded operator $h_k$ due to the unitarity of the Weyl operator $\cW_u$ for any $u \in \mR^n$. In order to obtain, for example, polynomial functions of the system variables,  the Fourier transforms $H_k$ in (\ref{hk}) have to be  generalized functions \cite{V_2002} (a particular class of such systems will be considered in Section~\ref{sec:hlin}).

The following theorem uses integral operators $\cA$, $\cB$, $\cC$ (of which $\cA$ and $\cC$ are linear) which map a function $\varphi: \mR^n \to \mC$ to the functions $\cA(\varphi): \mR^n\to \mC$ and $\cB(\varphi), \cC(\varphi): \mR^n\to \mC^{\frac{m}{2}}$ as
\begin{align}
\label{cA}
    \cA(\varphi)(u)
    & :=
    \int_{\mR^n}
    V(u,v)\varphi(u+v)
    \rd v,\\
\label{cB}
  \cB(\varphi)(u)
  & :=
  \cC(\varphi)(u)
  -
    \varphi(u) \cC(\varphi)(0),\\
\label{cC}
      \cC(\varphi)(u)
       & :=
        \int_{\mR^n}
            \Gamma(u,v)
            \varphi(u+v)
            \rd v.
\end{align}
The kernel function $V: \mR^n \x \mR^n\to \mC$  in (\ref{cA}) is computed as
\begin{equation}
\label{V}
  V(u,v)
  :=
    -2        \sin(u^{\rT}\Theta v)H_0(v)
     -2 \int_{\mR^n}
            \sin(u^{\rT}\Theta w)
            H(w)^{\rT}
            \Ups(u+w,w-v)
        H(v-w) \rd w,
\end{equation}
where $H_0$ and $H$  are the Fourier transforms from (\ref{hk}) and (\ref{h}), and
$\Ups: \mR^n\x \mR^n\to \mR^{m\x m}$ is an auxiliary function which is expressed as
\begin{equation}
\label{Ups}
    \Ups(u,v)
     :=
    \sin(u^{\rT}\Theta v) I_m + \cos(u^{\rT}\Theta v) J
\end{equation}
in terms of the CCR matrix $\Theta$ in (\ref{CCR}) and the matrix $J$ in (\ref{Omega}). Also, the function $\Gamma: \mR^n\x \mR^n\to \mR^{\frac{m}{2}}$ in (\ref{cC}) is related by
\begin{align}
\label{Gam}
    \Gamma(u,v)
        :=
        2
        \Big(\cos(u^{\rT}\Theta v)&
        {\begin{bmatrix}
            E_2 & E_1
        \end{bmatrix}}
         +
        \sin(u^{\rT}\Theta v)
        {\begin{bmatrix}
            -E_1 & E_2
        \end{bmatrix}}
        \Big)
        H(v)
\end{align}
to the matrix $E$ from (\ref{E}) and (\ref{EE}) through the matrices $E_1, E_2 \in \mR^{\frac{m}{2}\x \frac{m}{2}}$ given by
\begin{equation}
\label{E12}
                           E_1:= \Re(E^{-\rT}),
                           \qquad
                           E_2:= \Im(E^{-\rT}).
\end{equation}

%%%%%%%%%%%%%%%%%%%%%%%%%%%%%%%%%%%%%%%%%%%%%%%%%%%%%%%%%%%%%%%%%%%%%%%%%%%%%%%%%%%%%%%%%%%%%%%%%%%%%%%%%%%%%%%
\begin{thm}
\label{th:Phidot}
Suppose the Hamiltonian $h_0$ and the coupling operators $h_1, \ldots, h_m$ of the quantum stochastic system (\ref{dX}) have the Weyl quantization form (\ref{hk}). Then the posterior QCF (\ref{Phi}) with respect to the nondemolition measurements in (\ref{Z}), (\ref{FF}) satisfies the SIDE
\begin{equation}
\label{dPhi}
    \rd\Phi(t,u)
    =
    \cA(\Phi(t,\cdot))(u)
    \rd t
    +
    \cB(\Phi(t,\cdot))(u)^{\rT}K\rd \chi.
\end{equation}
Here, the innovation process $\chi$ does not depend on $u\in \mR^n$ and its Ito differential is given by
\begin{equation}
\label{dchiPhi}
    \rd \chi
    =
        \rd Z
        -
        FF^{\rT}K^{\rT}
        \cC(\Phi(t,\cdot))(0)\rd t,
\end{equation}
where the integral operators  $\cA$, $\cB$, $\cC$  from (\ref{cA})--(\ref{cC}) act over the spatial argument of $\Phi$, and the matrix $K\in \mR^{\frac{m}{2}\x r}$ is given by (\ref{K}).
\hfill$\square$
\end{thm}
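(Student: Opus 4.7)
The plan is to apply the BKSE (\ref{dpi}) to the parameter-indexed family $\xi = \cW_u$, so that $\pi_t(\cW_u) = \Phi(t,u)$. The theorem then reduces to expressing three ingredients as integral operators acting on the function $\Phi(t,\cdot)$: the drift $\pi_t(\cL(\cW_u))$, the coefficient $\beta$ in (\ref{beta}), and the subtracted term $2FF^{\rT}K^{\rT}\pi_t(\Re M)$ inside the innovation (\ref{dchi}).

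For the drift, I would substitute the Weyl quantizations (\ref{hk}) of $h_0$ and of $h_1,\ldots,h_m$ into the GKSL generator (\ref{cL}) and reduce every Weyl operator product via the CCRs (\ref{CCR}). The basic identities $\cW_v\cW_u = \re^{iu^{\rT}\Theta v}\cW_{u+v}$, whence $[\cW_v,\cW_u] = 2i\sin(u^{\rT}\Theta v)\cW_{u+v}$ and $\{\cW_v,\cW_u\} = 2\cos(u^{\rT}\Theta v)\cW_{u+v}$, turn the Hamiltonian commutator $i[h_0,\cW_u]$ into $-2\int \sin(u^{\rT}\Theta v)H_0(v)\cW_{u+v}\rd v$, matching the first term of $V$ in (\ref{V}) after $\pi_t$ is applied. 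The dissipation sum $\frac{1}{2}\sum_{j,k}\omega_{jk}([h_j,\cW_u]h_k + h_j[\cW_u,h_k])$ produces triple Weyl products; collecting the real and imaginary parts of $\omega_{jk}=\delta_{jk}+iJ_{jk}$ and performing a change of integration variable combines them into the compact form involving $\Upsilon(u+w,w-v)$ from (\ref{Ups}). This is precisely the calculation of $\cL(\cW_u)$ carried out in \cite{V_2015c}, which I would cite, and applying $\pi_t$ then identifies the drift with $\cA(\Phi(t,\cdot))(u)$.

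For the diffusion coefficient, I would expand $M = E^{-\rT}L$ via the decomposition $E^{-\rT} = E_1 + iE_2$ from (\ref{E12}) together with the relation $L = h_{(2)} - ih_{(1)}$ implied by (\ref{hL}), where $h_{(1)},h_{(2)}$ denote the two halves of $h$. A short algebraic rearrangement gives
\begin{equation*}
M^{\#}\cW_u + \cW_u M
= E_1\{h_{(2)},\cW_u\} + E_2\{h_{(1)},\cW_u\}
+ iE_1[h_{(1)},\cW_u] - iE_2[h_{(2)},\cW_u],
\end{equation*}
and substituting (\ref{h}) together with the commutator/anticommutator identities above produces $\int \Gamma(u,v)\cW_{u+v}\rd v$ with exactly the kernel $\Gamma$ of (\ref{Gam}). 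Taking conditional expectation yields $\pi_t(M^{\#}\cW_u + \cW_u M) = \cC(\Phi(t,\cdot))(u)$. Evaluating at $u=0$ (where $\cos = 1$, $\sin = 0$) shows $\cC(\Phi(t,\cdot))(0) = 2\pi_t(\Re M)$, which reduces the correction in (\ref{beta}) to $\beta = \cB(\Phi(t,\cdot))(u)$ as in (\ref{cB}) and converts the drift of the innovation in (\ref{dchi}) into the form (\ref{dchiPhi}), completing the identification.

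The main obstacle is the bookkeeping in the dissipation half of $\cL(\cW_u)$: the double sum with the complex Ito weights $\omega_{jk}$ generates products of sines and cosines of several quadratic forms in $u,v,w$, and compressing these into a single auxiliary matrix $\Upsilon(u+w,w-v)$ requires a careful symmetrisation and change of variables. I would handle this by invoking \cite{V_2015c}; the genuinely new work is the diffusion piece, which is self-contained and rests only on the Weyl CCRs and the real/imaginary decomposition (\ref{E12}) of $E^{-\rT}$.
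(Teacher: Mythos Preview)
Your proposal is correct and follows essentially the same approach as the paper: apply the BKSE (\ref{dpi}) with $\xi=\cW_u$, cite \cite{V_2015c} for the identity $\cL(\cW_u)=\int V(u,v)\cW_{u+v}\,\rd v$ to obtain the drift $\cA(\Phi(t,\cdot))(u)$, and then reduce $M^{\#}\cW_u+\cW_u M$ to $\int\Gamma(u,v)\cW_{u+v}\,\rd v$ using the Weyl CCRs, with the $u=0$ specialization recovering $2\pi_t(\Re M)=\cC(\Phi(t,\cdot))(0)$. The only cosmetic difference is that the paper packages $M$ as $\big([1\ i]\otimes E^{-\rT}\big)Jh$ and computes $\cW_u M$ and $M^{\#}\cW_u$ separately before summing, whereas you split $M$ into its self-adjoint and skew-adjoint parts and use the commutator/anticommutator identities directly; both routes yield the same kernel $\Gamma$ in (\ref{Gam}).
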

%%%%%%%%%%%%%%%%%%%%%%%%%%%%%%%%%%%%%%%%%%%%%%%%%%%%%%%%%%%%%%%%%%%%%%%%%%%%%%%%%%%%%%%%%%%%%%%%%%%%%%%%%%%%%%%
\begin{proof}
We will evaluate the terms of the BKSE (\ref{dpi}) at the Weyl operator $\xi:= \cW_u$ in (\ref{cW}) using the Weyl quantization (\ref{hk}) and (\ref{h}).
From the proof of \cite[Theorem 1]{V_2015c}, it follows that the GKSL generator (\ref{cL}) acts on $\cW_u$ as
\begin{equation}
\label{cLcW}
    \cL(\cW_u) =      \int_{\mR^n}
    V(u,v)\cW_{u+v}
    \rd v,
\end{equation}
where the function $V$ is computed according to (\ref{V}) and (\ref{Ups}). In view of (\ref{Phi}), the conditional expectation of (\ref{cLcW}) takes the form
\begin{equation}
\label{picLW}
    \pi_t(\cL(\cW_u))
     =
%    \int_{\mR^n}
%    V(u,v)\pi_t(\cW_{u+v})
%    \rd v\\
    \int_{\mR^n}
    V(u,v)\Phi(t,u+v)
    \rd v
     =
    \cA(\Phi(t,\cdot))(u),
\end{equation}
with $\cA$ given by (\ref{cA}).
The modified vector of coupling operators in (\ref{ML}) can be represented as
\begin{align}
\nonumber
    M
    & =
    E^{-\rT}(\Re L + i\Im L)
     =
    \left({\begin{bmatrix}
            1 & i
          \end{bmatrix}}
          \ox E^{-\rT}
           \right)
          {\begin{bmatrix}
           \Re L\\
           \Im L
         \end{bmatrix}}\\
\label{M}
    & =
    \left({\begin{bmatrix}
            1 & i
          \end{bmatrix}}
          \ox E^{-\rT}
           \right)
           Jh
     =
    \left({\begin{bmatrix}
            1 & i \\
          \end{bmatrix}}
          \ox E^{-\rT}
           \right)
           J
        \int_{\mR^n} H(v)\cW_v\rd v,
\end{align}
where use is made of (\ref{hL}) and the Weyl quantization (\ref{h}) of the coupling operators. In view of the Weyl CCRs (\ref{CCR}), it follows from (\ref{M}) that
\begin{align}
\nonumber
    \cW_u M
    & =
    \left({\begin{bmatrix}
            1 & i
          \end{bmatrix}}
          \ox E^{-\rT}
           \right)
           J
          \int_{\mR^n} H(v)\cW_u\cW_v\rd v\\
\label{WM}
    & =
    \left({\begin{bmatrix}
            1 & i \\
          \end{bmatrix}}
          \ox E^{-\rT}
           \right)
           J
        \int_{\mR^n} \re^{-iu^{\rT}\Theta v}H(v)\cW_{u+v}\rd v.
\end{align}
A similar reasoning leads to
\begin{equation}
\label{MW}
        M^{\#} \cW_u
     =
     \overline{
    \left({\begin{bmatrix}
            1 & i
          \end{bmatrix}}
          \ox E^{-\rT}
           \right)}
           J
        \int_{\mR^n} \re^{iu^{\rT}\Theta v}H(v)\cW_{u+v}\rd v.
\end{equation}
The sum of the left-hand sides of (\ref{WM}) and  (\ref{MW}) takes the form
\begin{equation}
\label{WMMW}
    \cW_u M  + M^{\#}\cW_u
     =
    2
    \int_{\mR^n}
        \Re
        \big(
            \re^{-iu^{\rT}\Theta v}
            \begin{bmatrix}
              1 & i
            \end{bmatrix}
            \ox
            E^{-\rT}
        \big)
        JH(v)
     =
    \int_{\mR^n}
    \Gamma(u,v)
    \cW_{u+v}
    \rd v,
\end{equation}
where use is made of the matrices $J$, $E_1$, $E_2$  from (\ref{Omega}) and (\ref{E12}) leading to the function $\Gamma$ in (\ref{Gam}). The conditional expectation of (\ref{WMMW}) is
\begin{equation}
\label{piWMMW}
    \pi_t(\cW_u M  + M^{\#}\cW_u)
     =
    \int_{\mR^n}
    \Gamma(u,v)
    \Phi(t,u+v)
    \rd v
     =
    \cC(\Phi(t,\cdot))(u),
\end{equation}
with $\cC$ given by (\ref{cC}).
Since the vector $Jh$ in (\ref{M}) consists of self-adjoint operators, then
\begin{equation}
\label{ReM}
    \Re M
     =
    \Re
      \left({\begin{bmatrix}
            1 & i
          \end{bmatrix}}
          \ox E^{-\rT}
           \right)
           J
           h
           =
           {\begin{bmatrix}
            E_1 & -E_2
          \end{bmatrix}}Jh
            =
                      {\begin{bmatrix}
            E_2 & E_1
          \end{bmatrix}}
        \int_{\mR^n} H(v)\cW_v\rd v
\end{equation}
in view of (\ref{E12}).
The conditional expectation of (\ref{ReM}) is related to (\ref{Gam}) and (\ref{cC}) by
\begin{equation}
\label{piReM}
           2\pi_t(\Re M)
           =
        \int_{\mR^n} \Gamma(0,v)\Phi(t,v)\rd v
        =
        \cC(\Phi(t,\cdot))(0).
\end{equation}
Substitution of (\ref{Phi}), (\ref{picLW}), (\ref{piWMMW}) and (\ref{piReM}) into (\ref{dpi})--(\ref{beta}) establishes (\ref{dPhi}) and (\ref{dchiPhi}).
 \end{proof}
%%%%%%%%%%%%%%%%%%%%%%%%%%%%%%%%%%%%%%%%%%%%%%%%%%%%%%%%%%%%%%%%%%%%%%%%%%%%%%%%%%%%%%%

In the absence of measurements, the QCF $\Phi$ is no longer random, the  diffusion term in (\ref{dPhi}) vanishes, and the SIDE reduces to the IDE
$
        \d_t\Phi
    =
    \cA(\Phi(t,\cdot))
$
for the unconditional QCF obtained in \cite[Theorem 1]{V_2015c}. If the system and fields are uncoupled,  this IDE reduces further to the Moyal equation
$\d_t\Phi(t,u) = -2        \int_{\mR^n}\sin(u^{\rT}\Theta v)H_0(v)\Phi(t,u+v)\rd v$, which follows from (\ref{cA}) by letting $H=0$ in (\ref{V}).

%%%%%%%%%%%%%%%%%%%%%%%%%%%%%%%%%%%%%%%%%%%%%%%%%%%%%%%%%%%%%%%%%%%%%%%%%%%%%%%%%%%%%%%%%%%%%%%%%%%
\section{POSTERIOR QCF DYNAMICS IN THE CASE OF LINEAR SYSTEM-FIELD COUPLING}\label{sec:hlin}
%%%%%%%%%%%%%%%%%%%%%%%%%%%%%%%%%%%%%%%%%%%%%%%%%%%%%%%%%%%%%%%%%%%%%%%%%%%%%%%%%%%%%%%%%%%%%%%%%%%

We will now consider a class \cite{SVP_2014,V_2015c} of open quantum systems whose coupling operators are linear functions of the system variables:
\begin{equation}
\label{N}
    h  := NX,
\end{equation}
where $N \in \mR^{m\x n}$ is a coupling matrix, while the Hamiltonian
\begin{align}
\label{h0}
    h_0
    :=
    \frac{1}{2} X^{\rT} R X
    +
    \int_{\mR^d} \Psi(v) \cW_{S^{\rT}v} \rd v
\end{align}
consists of a quadratic part, specified by an energy matrix $R=R^{\rT}\in \mR^{n\x n}$,  and a nonquadratic part represented in the Weyl quantization form. The latter depends on $d\<n $ system variables comprising a vector $S X$ (where $S\in \{0,1\}^{d\x n}$ is a submatrix of a permutation matrix of order $n$) and is determined by the Fourier transform $\Psi: \mR^d \to \mC$ of a real-valued function on $\mR^d$. It is assumed that $\int_{\mR^d}|\Psi(v)||v|\rd v<+\infty$.
For such a system,  the representations  (\ref{hk}) and (\ref{h}) hold with
\begin{align}
\label{HHlin1}
    H_0(u)
     & =
    -\frac{1}{2}\Tr (R\delta''(u)) + \int_{\mR^d}\Psi(v)\delta(u-S^{\rT}v)\rd v,\\
\label{HHlin2}
    H(u)  &= iN\delta'(u),
\end{align}
where $\delta'$ and $\delta''$ are the distributional gradient vector and Hessian matrix of the $n$-dimensional Dirac delta function   $\delta$. Since $SS^{\rT}=I_d$, the matrix $S$ describes an isometry between $\mR^d$ and the subspace $S^{\rT}\mR^d\subset \mR^n$. Therefore,  the integral in (\ref{HHlin1}), as a generalized function \cite{V_2002}, is a complex measure on this subspace with density $\Psi$ (with respect to the $d$-variate Lebesgue measure on $S^{\rT}\mR^d$).
The corresponding QSDE (\ref{dX}) takes the form \cite{V_2015c}
\begin{equation}
\label{dXhlin}
    \rd X
    =
    \Big(
        AX+2i\Theta S^{\rT}\int_{\mR^d} \Psi(v) v\cW_{S^{\rT}v}\rd v
    \Big)\rd t + B\rd W,
\end{equation}
where the matrices $A\in \mR^{n\x n}$ and $B\in \mR^{n\x m}$ are related  to the coupling and energy matrices $N$ and $R$ in (\ref{N}) and (\ref{h0}) by
\begin{equation}
\label{AB}
    A:= 2\Theta (R + N^{\rT}JN),% = 2\Theta R - \frac{1}{2}BJB^{\rT}\Theta^{-1},
    \qquad
    B:= 2\Theta N^{\rT}.
\end{equation}
The nonlinear dependence on the system variables  in the QSDE (\ref{dXhlin}) comes from the nonquadratic part of the Hamiltonian. 
For example, in the case (\ref{Xqp}), when the system has a positive definite mass matrix $\sM$, and  the potential energy is split into a quadratic part $\frac{1}{2}q^{\rT} \sK q$ (with the stiffness matrix $\sK$) and a nonquadratic  part $\int_{\mR^d} \Psi(v) \re^{iv^{\rT}q}\rd v$, with $d=\frac{n}{2}$, the representation (\ref{h0}) holds with $R= \diag(\sK, \sM^{-1})$ and $S= {\small \begin{bmatrix}I_d & 0\end{bmatrix}}$.

%%%%%%%%%%%%%%%%%%%%%%%%%%%%%%%%%%%%%%%%%%%%%%%%%%%%%%%%%%%%%%%%%%%%%%%%%%%%%%%%%%%%%%%%%%%%%%%%%%%
\begin{thm}
\label{th:Hlin}
Suppose the vector $h$ of system-field coupling operators and the system Hamiltonian $h_0$ are given by  (\ref{N}) and (\ref{h0}). Then the SIDE (\ref{dPhi}) for the posterior QCF $\Phi$ in (\ref{Phi}) takes the form
\begin{align}
\nonumber
    \rd  \Phi(t,u)
    = &
    \Big(u^{\rT} A \d_u \Phi(t,u)
        -\frac{1}{2}
        |B^{\rT}u|^2
    \Phi(t,u)
     -
    2
    \int_{\mR^d}
    \sin(u^{\rT}\Theta S^{\rT} v)
    \Psi(v)
    \Phi(t,u+S^{\rT}v)\rd v\Big)\rd t\\
\label{dPhihlin}
        & + 2
        i
        \Big(
                \Phi(t,u)
        {\begin{bmatrix}
            -E_1 & E_2
        \end{bmatrix}}
        N
        \Theta u
         -{\begin{bmatrix}
            E_2 & E_1
        \end{bmatrix}}N (\d_u\Phi(t,u)-\Phi(t,u)\d_u\Phi(t,0))
        \Big)^{\rT}K \rd \chi,
\end{align}
where the matrices $A$ and $B$ are given by (\ref{AB}). The corresponding posterior QPDF $\mho$ in (\ref{mho}) satisfies the SIDE
\begin{align}
\nonumber
    \rd\mho(t,x) =&
    \Big(-\div(\mho(t,x)Ax) + \frac{1}{2}\div^2(\mho(t,x) BB^{\rT})
\\
\nonumber
  &
  -
    2
    \int_{\mR^d}
    \Xi(x,v)\mho(t,x-\Theta S^{\rT}v)
    \rd v\Big)\rd t\\
\label{dmhohlin}
        & +
        2
        \Big(
        {\begin{bmatrix}
            E_1 & -E_2
        \end{bmatrix}}
        N
        \Theta \d_x\mho(t,x)
        +
        \mho(t,x)
        {\begin{bmatrix}
            E_2 & E_1
        \end{bmatrix}}
        N \Big(
            x-\int_{\mR^n}\mho(t,y)y\rd y
            \Big)
        \Big)^{\rT}K \rd \chi,
 \end{align}
 where $\div(\cdot)$ is the divergence operator with respect to $x\in\mR^n$, and the kernel function $\Xi: \mR^n\x \mR^d \to \mR$ is expressed as
 \begin{equation}
 \label{Xi}
    \Xi(x,v)
    :=
    \Re \Psi(v) \sin(v^{\rT} S x)
    +
    \Im \Psi(v) \cos(v^{\rT} S x)
\end{equation}
in terms of  the function $\Psi$ and the matrix $S$ from (\ref{h0}). The Ito differential of the innovation process $\chi$ in (\ref{dchiPhi}) can be represented as
\begin{equation}
\label{dchihlin}
    \rd \chi
    =
        \rd Z
        -
        2
        FF^{\rT}K^{\rT}
        {\begin{bmatrix}
            E_2 & E_1
        \end{bmatrix}}
        N
        \int_{\mR^n}
        \mho(t,y)
        y\rd y\ \rd t.
\end{equation}
\hfill $\square$
\end{thm}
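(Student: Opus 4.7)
The plan is to specialise the general SIDE (\ref{dPhi}) of Theorem~\ref{th:Phidot} to the present setting by feeding the distributional Fourier transforms (\ref{HHlin1})--(\ref{HHlin2}) of $h_0$ and $h$ into the kernels $V$ and $\Gamma$ of (\ref{V}) and (\ref{Gam}). The reductions rely on the standard identities $\int f(v)\delta(v-a)\rd v=f(a)$, $\int f(v)\delta'(v)\rd v=-\d_v f(0)$ and $\int f(v)\delta''(v)\rd v=\d_v\d_v^{\rT}f(0)$, which collapse the integrals defining the operators $\cA$, $\cB$, $\cC$ to local differential expressions in $u$ plus an explicit integral remainder carrying the nonquadratic part of the Hamiltonian. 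The QPDF equation (\ref{dmhohlin}) is then obtained by inverse spatial Fourier transform according to (\ref{mho}), and (\ref{dchihlin}) will follow at once from (\ref{dchiPhi}) once $\cC(\Phi)(0)$ is known.

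For the drift I split $V$ in (\ref{V}) into three contributions induced by the decomposition of $H_0$ in (\ref{HHlin1}) and the bilinear integrand $H(w)^{\rT}\Ups(u+w,w-v)H(v-w)$. The $-\tfrac{1}{2}\Tr(R\delta''(v))$ piece, by a second-order distributional derivative at $v=0$ of $\sin(u^{\rT}\Theta v)\Phi(t,u+v)$, yields a first-order term $u^{\rT}(2\Theta R)\d_u\Phi$ (the zero-order part vanishes because $\sin$ vanishes at the origin). The $\Psi$-delta piece, after resolving the delta in $v$, collapses to $-2\int_{\mR^d}\sin(u^{\rT}\Theta S^{\rT}v)\Psi(v)\Phi(t,u+S^{\rT}v)\rd v$. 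In the bilinear piece I substitute $H(w)=iN\delta'(w)$ and $H(v-w)=iN\delta'(v-w)$ and apply the $\delta'$-identity in both $w$ and $v$; using $\Ups(u,0)=J$ from (\ref{Ups}) together with $\Theta^{\rT}=-\Theta$ and $J^{\rT}=-J$, the resulting double integral reduces to $u^{\rT}(2\Theta N^{\rT}JN)\d_u\Phi-\tfrac{1}{2}|B^{\rT}u|^2\Phi$. Assembling the three contributions and invoking $A=2\Theta(R+N^{\rT}JN)$, $B=2\Theta N^{\rT}$ from (\ref{AB}) produces the drift of (\ref{dPhihlin}); this is parallel to the unconditional derivation of \cite[Theorem~1]{V_2015c}.

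For the diffusion I substitute $H(v)=iN\delta'(v)$ into (\ref{Gam}) and apply the $\delta'$-identity to $\cC(\Phi)(u)$ in (\ref{cC}). Differentiating $\Gamma(u,v)\Phi(t,u+v)$ in $v$ at $v=0$ splits into the derivative hitting the trigonometric factors, which (using $\d_v\sin(u^{\rT}\Theta v)|_0=\Theta^{\rT}u$, $\d_v\cos(u^{\rT}\Theta v)|_0=0$, and $\Theta^{\rT}=-\Theta$) produces the $u$-multiplicative piece $2i\begin{bmatrix}-E_1 & E_2\end{bmatrix}N\Theta u\,\Phi(t,u)$, and the derivative hitting $\Phi(t,u+v)$, which produces $-2i\begin{bmatrix}E_2 & E_1\end{bmatrix}N\d_u\Phi(t,u)$. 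Forming $\cB(\Phi)(u)=\cC(\Phi)(u)-\Phi(t,u)\cC(\Phi)(0)$ then kills the $u$-multiplicative term at $u=0$ and introduces the centring correction $\Phi(t,u)\d_u\Phi(t,0)$, reproducing the bracket of (\ref{dPhihlin}). Since $\d_u\Phi(t,0)=i\int_{\mR^n}\mho(t,y)y\rd y$ and $\cC(\Phi)(0)=-2i\begin{bmatrix}E_2 & E_1\end{bmatrix}N\d_u\Phi(t,0)$, formula (\ref{dchihlin}) is immediate from (\ref{dchiPhi}).

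Finally, the QPDF equation (\ref{dmhohlin}) is obtained by inverting the Fourier transform in (\ref{mho}) term by term, using $u_j\Phi\leftrightarrow i\d_{x_j}\mho$, $\d_{u_j}\Phi\leftrightarrow ix_j\mho$, and the translation rule $e^{ib^{\rT}u}\Phi(t,u+a)\leftrightarrow e^{ia^{\rT}(x-b)}\mho(t,x-b)$. These convert $u^{\rT}A\d_u\Phi$ into $-\div(Ax\,\mho)$ and $|B^{\rT}u|^2\Phi$ into $-\div^2(\mho BB^{\rT})$, while the sine integral (after symmetrising in $v\mapsto-v$, using $\Psi(-v)=\overline{\Psi(v)}$ and the cancellation $v^{\rT}S\Theta S^{\rT}v=0$ from the antisymmetry of $\Theta$) yields the kernel $\Xi$ of (\ref{Xi}). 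The diffusion bracket transforms analogously. The main technical obstacle is the distributional calculus in the bilinear $V$-integral and the consistent sign bookkeeping for the $E_1,E_2$ blocks and the antisymmetries of $\Theta$ and $J$; once these conventions are pinned down, the remainder is algebraic verification.
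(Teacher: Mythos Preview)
Your proposal is correct and follows essentially the same route as the paper: both specialise Theorem~\ref{th:Phidot} by substituting the distributional transforms (\ref{HHlin1})--(\ref{HHlin2}) into the kernels $V$ and $\Gamma$, reduce $\cC(\Phi)$ via the $\delta'$-identity to obtain the diffusion bracket, and then pass to the QPDF by spatial Fourier inversion; the paper simply delegates the drift computation to \cite[Theorem~2]{V_2015c} where you spell out the three contributions explicitly.
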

%%%%%%%%%%%%%%%%%%%%%%%%%%%%%%%%%%%%%%%%%%%%%%%%%%%%%%%%%%%%%%%%%%%%%%%%%%%%%%%%%%%%%

\begin{proof}
The drift term in (\ref{dPhihlin}) (and its spatial Fourier transform which is the drift term of (\ref{dmhohlin})) was obtained in \cite[Theorem 2]{V_2015c} and can be established directly  by substituting (\ref{HHlin1}) and (\ref{HHlin2}) into (\ref{V}) and (\ref{cA}). We now turn to the diffusion terms of these SIDEs. By substituting (\ref{HHlin2}) into (\ref{Gam}) and using the relation $f\delta' = f(0)\delta'-f'(0)\delta$ for infinitely differentiable functions $f$ (see, for example, \cite{V_2002}), it follows that (\ref{cC}) takes the form
\begin{align}
\nonumber
    \cC(\varphi)(u)
%    \int_{\mR^n}
%    \Gamma&(u,v)
%    \varphi(u+v)
%    \rd v\\
%\nonumber
     =&
        2
        i
        \int_{\mR^n}
        \Big(\cos(u^{\rT}\Theta v)
        {\begin{bmatrix}
            E_2 & E_1
        \end{bmatrix}}
         +
        \sin(u^{\rT}\Theta v)
        {\begin{bmatrix}
            -E_1 & E_2
        \end{bmatrix}}
        \Big)
        N\delta'(v)
        \varphi(u+v)
        \rd v\\
\label{cCphi}
     =&
        2
        i
        \Big(
                \varphi(u)
        {\begin{bmatrix}
            -E_1 & E_2
        \end{bmatrix}}
        N
        \Theta u
        -{\begin{bmatrix}
            E_2 & E_1
        \end{bmatrix}}N \varphi'(u)
        \Big)
\end{align}
for any bounded smooth function $\varphi: \mR^n \to \mC$.
In particular, at $u=0$,
\begin{align}
\label{cCphi0}
    \cC(\varphi)(0)
     =
        -2
        i
        {\begin{bmatrix}
            E_2 & E_1
        \end{bmatrix}}N \varphi'(0).
\end{align}
Substitution of (\ref{cCphi}) and (\ref{cCphi0}) into (\ref{cB}) leads to
\begin{equation}
\label{cBphi}
    \cB(\varphi)(u)
     =
        2
        i
        \Big(
                \varphi(u)
        {\begin{bmatrix}
            -E_1 & E_2
        \end{bmatrix}}
        N
        \Theta u
         -{\begin{bmatrix}
            E_2 & E_1
        \end{bmatrix}}N (\varphi'(u)-\varphi(u)\varphi'(0))
        \Big).
\end{equation}
In view of (\ref{dPhi}), application of (\ref{cBphi}) to the posterior QCF yields the diffusion term in (\ref{dPhihlin}) whose spatial Fourier transform leads to the diffusion term in (\ref{dmhohlin}). The representation (\ref{dchihlin}) follows from (\ref{dchiPhi}), (\ref{cCphi0}) and the relation $-i\d_u\Phi(t,0) = \int_{\mR^n} \mho(t,y)y\rd y$.
\end{proof}
%%%%%%%%%%%%%%%%%%%%%%%%%%%%%%%%%%%%%%%%%%%%%%%%%%%%%%%%%%%%%%%%%%%%%%%%%%%%%%%%%%%%%

The upper line of (\ref{dmhohlin}) is recognizable as the Fokker-Planck-Kolmogorov equation \cite{KS_1991}
$\d_t\mho =
    -\div(\mho Ax) + \frac{1}{2}\div^2(\mho BB^{\rT})$ for the unconditional PDF of
a classical Markov diffusion process with the linear drift $Ax$ and diffusion matrix $BB^{\rT}$. For the quantum system being considered, the representations
\begin{equation}
\label{piX}
    \pi_t(X)
    =-i\d_u\Phi(t,0) = \int_{\mR^n} \mho(t,x)x\rd x
\end{equation}
for the posterior mean vector of the system variables, similar to the corresponding classical relations,  remain valid even if the QPDF $\mho$ is not nonnegative everywhere. Now, suppose $\Psi=0$, so that the function $\Xi$ in (\ref{Xi}) vanishes, and the system is an open quantum harmonic oscillator \cite{EB_2005}. In this case, if the initial system state is Gaussian \cite{P_2010}, the conditional quantum state remains Gaussian
with the time-varying mean vector $\mu$ and the real part $\Sigma$ of the quantum covariance matrix of the system variables:
\begin{equation}
\label{muSigma}
    \mu := \pi_t(X),
    \qquad
    \Sigma:= \Re \pi_t((X-\mu)(X-\mu)^{\rT}),
\end{equation}
where  $\Sigma + i\Theta\succcurlyeq 0 $ in view of the Heisenberg uncertainty principle \cite{H_2001}.
The corresponding Gaussian QCF is given by
\begin{equation}
\label{Phigauss}
    \Phi_{\mu,\Sigma}(u) = \re^{i\mu^{\rT} u - \frac{1}{2}\|u\|_{\Sigma}^2},
    \qquad
    u \in \mR^n,
\end{equation}
where $\|u\|_{\Sigma}:= |\sqrt{\Sigma}u|$. The following theorem describes the time evolution of the parameters of the posterior Gaussian state.

%%%%%%%%%%%%%%%%%%%%%%%%%%%%%%%%%%%%%%%%%%%%%%%%%%%%%%%%%%%%%%%%%%%%%%%%%%%%%%%%%%%%%%%%%%%%%%%%%%%%
\begin{thm}
\label{th:linfilt}
Suppose the system dynamics are linear and specified by (\ref{N}) and (\ref{h0}) with $\Psi=0$, and the initial system state is Gaussian. % with mean vector $\mu_0$ and  quantum covariance matrix $\Sigma_0 + i\Theta$.
Then the parameters $\mu$ and $\Sigma$ of the posterior Gaussian state in (\ref{muSigma}) satisfy
\begin{align}
\label{dmu}
    \rd \mu
    & =
    A\mu \rd t + 2(P+Q\Sigma)^{\rT}K\rd \chi,\\
\label{Sigmadot}
    \dot{\Sigma}
    & =
    A\Sigma +\Sigma A^{\rT} + BB^{\rT}
     -
    4
    (P+Q\Sigma)^{\rT}KFF^{\rT}K^{\rT}(P+Q\Sigma).
\end{align}
Here, the matrices $A$ and $B$ are given by (\ref{AB}), the matrices  $P,Q\in \mR^{\frac{m}{2}\x n}$ are related by
\begin{equation}
\label{PQ}
    P:=
        {\begin{bmatrix}
            -E_1 & E_2
        \end{bmatrix}}
        N
        \Theta,
        \qquad
        Q:=
        {\begin{bmatrix}
            E_2 & E_1
        \end{bmatrix}}N
\end{equation}
to the matrices (\ref{E12}),
and the innovation process $\chi$ is driven by the measurements as
\begin{equation}
\label{dchilin}
    \rd \chi
    =
        \rd Z
        -
        2
        FF^{\rT}K^{\rT}
        Q \mu\rd t.
\end{equation}
\hfill$\square$
\end{thm}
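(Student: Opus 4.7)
The plan is to substitute the Gaussian QCF ansatz $\Phi_{\mu,\Sigma}(u) = \re^{i\mu^{\rT} u - \frac{1}{2}\|u\|_\Sigma^2}$ from (\ref{Phigauss}) into the SIDE (\ref{dPhihlin}) with $\Psi=0$ and extract the dynamics of $\mu$ and $\Sigma$ by matching coefficients of powers of the spatial argument $u$. First I would compute $\d_u\Phi = (i\mu - \Sigma u)\Phi$ and rewrite the drift of (\ref{dPhihlin}) as $\big(iu^{\rT}A\mu - \tfrac{1}{2}u^{\rT}(A\Sigma + \Sigma A^{\rT} + BB^{\rT})u\big)\Phi\, \rd t$, using the symmetrization $u^{\rT}A\Sigma u = \tfrac{1}{2}u^{\rT}(A\Sigma+\Sigma A^{\rT})u$ (valid because the left-hand side is a scalar) together with $|B^{\rT}u|^2 = u^{\rT}BB^{\rT}u$.

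Next I would simplify the diffusion coefficient of (\ref{dPhihlin}). Since $\Phi(t,0)=1$ implies $\d_u\Phi(t,0) = i\mu$, the combination $\d_u\Phi - \Phi\,\d_u\Phi(t,0)$ telescopes to $-\Sigma u\,\Phi$. With the matrices $P$ and $Q$ from (\ref{PQ}), the vector inside the parentheses in the diffusion term of (\ref{dPhihlin}) collapses to $(P + Q\Sigma)u\,\Phi$, so the diffusion of $\Phi$ takes the compact form $2i\Phi u^{\rT}(P+Q\Sigma)^{\rT}K\,\rd\chi$.

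Then I would apply Ito's formula to $\Phi_{\mu(t),\Sigma(t)}(u)$ treating it as a function of the stochastic vector $\mu$ and the matrix $\Sigma$, writing $\rd\mu = \alpha\rd t + \beta\rd\chi$ for unknown $\alpha\in\mR^n$, $\beta\in\mR^{n\x r}$, and taking $\Sigma$ to be of bounded variation. Using the diffusion matrix $\rd\chi\rd\chi^{\rT} = FF^{\rT}\rd t$ of the innovation process, this yields
$$\rd\Phi = \Big(iu^{\rT}\alpha - \tfrac{1}{2}u^{\rT}\dot\Sigma u - \tfrac{1}{2}u^{\rT}\beta FF^{\rT}\beta^{\rT}u\Big)\Phi\,\rd t + iu^{\rT}\beta\,\Phi\,\rd\chi.$$
Comparison of the diffusion terms forces $\beta = 2(P+Q\Sigma)^{\rT}K$, and matching the linear-in-$u$ imaginary part of the drift gives $\alpha = A\mu$, producing (\ref{dmu}). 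Matching the quadratic-in-$u$ part of the drift produces $\dot\Sigma + \beta FF^{\rT}\beta^{\rT} = A\Sigma + \Sigma A^{\rT} + BB^{\rT}$, which is (\ref{Sigmadot}). The representation (\ref{dchilin}) follows from (\ref{dchihlin}) by using the identity $\int_{\mR^n}\mho(t,y)y\,\rd y = \mu$ from (\ref{piX}) together with $[E_2\ E_1]N = Q$.

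The main obstacle is justifying the \emph{a priori} preservation of Gaussianity, i.e., that the two-parameter family (\ref{Phigauss}) is closed under the flow of (\ref{dPhihlin}) with $\Psi=0$. This hinges on two structural facts: the drift is a first-order linear differential operator in $u$ with polynomial (in $u$) coefficients of degree at most two, which preserves the space of quadratic exponentials; and the diffusion coefficient, as reduced above, has the form $2i\Phi(P+Q\Sigma)u$, whose prefactor multiplying $\Phi$ is linear in $u$ and free of $\mu$. Consequently the Ito correction $\beta FF^{\rT}\beta^{\rT}$ depends only on $\Sigma$, the logarithm of $\Phi$ stays a quadratic polynomial in $u$ at all times, and $\Sigma$ satisfies a closed deterministic Riccati-type equation independent of the measurement record, in full parallel with the classical Kalman-Bucy filter.
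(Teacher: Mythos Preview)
Your proposal is correct and follows essentially the same approach as the paper: substitute the Gaussian ansatz into the SIDE (\ref{dPhihlin}) with $\Psi=0$, reduce the diffusion coefficient to $2i\Phi u^{\rT}(P+Q\Sigma)^{\rT}K$, and then extract the equations for $\mu$ and $\Sigma$ by matching the linear and quadratic-in-$u$ parts after an Ito computation. The only cosmetic difference is that the paper divides through by $\Phi$ and applies Ito's lemma to $\ln\Phi_{\mu,\Sigma}$ (computing the quadratic variation of $\rd\Phi/\Phi$ directly from the already-obtained diffusion term), whereas you apply Ito's formula to $\Phi_{\mu,\Sigma}$ as a function of the unknown semimartingale $\mu$ and bounded-variation $\Sigma$; both routes lead to the same coefficient comparison and your added discussion of why Gaussianity is preserved is a welcome elaboration the paper leaves implicit.
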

%%%%%%%%%%%%%%%%%%%%%%%%%%%%%%%%%%%%%%%%%%%%%%%%%%%%%%%%%%%%%%%%%%%%%%%%%%%%%%%%%%%%%%%%%%%%%%%%%%%%
\begin{proof}
By substituting the Gaussian QCF from (\ref{Phigauss}) into (\ref{dPhihlin}) with $\Psi=0$, and using the identity $\d_u \Phi_{\mu,\Sigma}(u) = \Phi_{\mu,\Sigma}(u) (i\mu - \Sigma u)$, it follows that
\begin{align}
\nonumber
\frac{\rd \Phi_{\mu,\Sigma}}{\Phi_{\mu,\Sigma}}
& = 
\Big(u^{\rT}A (i\mu - \Sigma u) - \frac{1}{2}|B^{\rT}u|^2 \Big)\rd t
 + 2
        i
        u^{\rT}
        (
        P+Q\Sigma
        )^{\rT}K \rd \chi\\
\label{dPhilin}
 & = 
iu^{\rT}(A \mu\rd t + 2(P+Q\Sigma)^{\rT}K \rd \chi)
     - \frac{1}{2}u^{\rT}(A\Sigma + \Sigma A^{\rT} + BB^{\rT})u\rd t,
\end{align}
where $P$, $Q$ are the matrices from (\ref{PQ}). On the other hand,
application of the classical Ito lemma \cite{KS_1991} to $\ln\Phi_{\mu,\Sigma}$ (which depends on time only through $\mu$ and $\Sigma$) yields
\begin{align}
\nonumber
    \frac{\rd \Phi_{\mu,\Sigma}}{\Phi_{\mu,\Sigma}}
    & =
    \rd \ln\Phi_{\mu,\Sigma} + \frac{1}{2} \left(\frac{\rd \Phi_{\mu,\Sigma}}{\Phi_{\mu,\Sigma}}\right)^2\\
\label{dPhilin2}
     & = 
    iu^{\rT}\rd \mu
      - \frac{1}{2} u^{\rT}\big(\rd \Sigma +4 (P+Q\Sigma)^{\rT}K FF^{\rT}K^{\rT} (P+Q\Sigma)\rd t\big)u.
\end{align}
Here, use is made of  the relation $\rd \ln\Phi_{\mu,\Sigma}(u) = iu^{\rT}\rd \mu - \frac{1}{2}u^{\rT}\rd \Sigma u$ together with the quadratic variation $\left(\frac{\rd \Phi_{\mu,\Sigma}}{\Phi_{\mu,\Sigma}}\right)^2 = -4u^{\rT}(P+Q\Sigma)^{\rT}K FF^{\rT}K^{\rT} (P+Q\Sigma)u \rd t$ of the complex-valued diffusion process in (\ref{dPhilin}) and the diffusion matrix $FF^{\rT}$ of the innovation process $\chi$. The right-hand sides of (\ref{dPhilin}) and (\ref{dPhilin2}) are quadratic functions of $u \in \mR^n$. By matching the corresponding coefficients, it follows that $\mu$ satisfies the SDE (\ref{dmu}) while $\Sigma$ satisfies the ODE (\ref{Sigmadot}). Also, (\ref{dchilin}) follows from (\ref{dchihlin}) in view of (\ref{piX}), (\ref{muSigma}) and (\ref{PQ}).
\end{proof}
%%%%%%%%%%%%%%%%%%%%%%%%%%%%%%%%%%%%%%%%%%%%%%%%%%%%%%%%%%%%%%%%%%%%%%%%%%%%%%%%%%%%%%%%%%%%%%%%%%%%

The SDE (\ref{dmu}) and the ODE (\ref{Sigmadot}) are the quantum Kalman filter equations for the case of linear-Gaussian system dynamics. Similarly to the covariance dynamics of the usual Kalman filter \cite{AM_1979} for classical systems, (\ref{Sigmadot}) is organized as a differential Riccati equation (though with different matrices) which reduces to the Lyapunov  ODE $    \dot{\Sigma}
    =
    A\Sigma +\Sigma A^{\rT} + BB^{\rT}
$
in the absence of measurements. If $\Psi\ne 0$ in (\ref{h0}), then the QSDE (\ref{dXhlin}) is no longer linear, the integral operator term of  the SIDE (\ref{dPhihlin}) comes into effect, and the Gaussian QCFs (\ref{Phigauss}) can be used only as approximate solutions.

%%%%%%%%%%%%%%%%%%%%%%%%%%%%%%%%%%%%%%%%%%%%%%%%%%%%%%%%%%%%%%%%%%%%%%%%%%%%%%%%%%%%%%%%%%%%%%%%%%%
\section{A GAUSSIAN %MEAN-FIELD
APPROXIMATION OF THE POSTERIOR STATE}\label{sec:gauss}
%%%%%%%%%%%%%%%%%%%%%%%%%%%%%%%%%%%%%%%%%%%%%%%%%%%%%%%%%%%%%%%%%%%%%%%%%%%%%%%%%%%%%%%%%%%%%%%%%%%

For the class of nonlinear quantum stochastic systems with linear system-field coupling, described in the previous section,   we will now consider a Gaussian approximation of the actual posterior quantum state of the system using the criterion
 \begin{equation}
 \label{proxPhi}
    \|
        \Phi(t,\cdot)
        -
        \Phi_{\theta}
    \|^2
    =
    \int_{\mR^n}
    |\Phi(t,u) - \Phi_{\theta}(u)|^2
    \rd u
    \longrightarrow
    \min.
 \end{equation}
 Here, $\|\varphi\|$ denotes  the norm in the Hilbert space $L^2(\mR^n)$ of square integrable complex-valued functions on $\mR^n$, % with the inner product $\bra \varphi, \psi\ket:= \int_{\mR^n} \overline{\varphi(u)}\psi(u)\rd u$,
 and the minimization is over the parameter $\theta := (\mu,\Sigma) \in \mR^n\x \mS_n$ of the Gaussian QCF in (\ref{Phigauss}) subject to the constraint $\Sigma + i\Theta \succcurlyeq 0$, where $\mS_n$ denotes the subspace of real symmetric matrices of order $n$.
 %The set of admissible values of $\theta$ is denoted by
% \begin{equation}
% \label{dig}
%    \digamma
%    :=
%    \big\{
%        \theta
%        :=
%        (\mu,\Sigma)\in \mR^n\x \mS_n:\
%        \Sigma +i\Theta\succcurlyeq 0
%    \big\}.
%\end{equation}
The set $\mR^n\x \mS_n$ is a Hilbert space with the direct-sum inner product generated from the Euclidean inner product in  $\mR^n$ and the Frobenius inner product
 $\bra \Sigma_1,  \Sigma_2\ket:= \Tr (\Sigma_1  \Sigma_2)$ in $\mS_n$.
 The squared $L^2$-distance in (\ref{proxPhi}) is not the only possible proximity criterion. For example, \cite{V_2015c} employs    the second-order relative Renyi entropy \cite{R_1961}  in order to quantify the deviation of the actual QPDF $\mho$ from the Gaussian QPDFs
 \begin{equation}
\label{mhogauss}
    \mho_{\theta}(x) := \frac{(2\pi)^{-n/2}}{\sqrt{\det \Sigma}}\re^{-\frac{1}{2}\|x-\mu\|_{\Sigma^{-1}}^2},
    \qquad
    x\in \mR^n,
\end{equation}
provided $\Sigma\succ 0$ (the latter assumption also makes $\Phi_{\theta}$ square integrable). Unlike the relative entropy, (\ref{proxPhi}) treats the actual and approximating
QCFs equally and has the same form in terms of the QPDFs (\ref{mho}) and (\ref{mhogauss}) due to the Plancherel identity:
\begin{equation}
\label{proxmho}
    \|
        \mho(t,\cdot)
        -
        \mho_{\theta}
    \|^2
    =
    (2\pi)^{-n}
    \|
        \Phi(t,\cdot)
        -
        \Phi_{\theta}
    \|^2.
\end{equation}
If the actual posterior QCF $\Phi$ (or the corresponding posterior QPDF $\mho$) were known, then $\theta$ could be chosen,  at every moment of time, so as to minimize the cost in  (\ref{proxPhi}) (or equivalently, (\ref{proxmho})).  However, in the nonlinear case $\Psi\ne 0$, when the actual posterior QCF and  QPDF are difficult to find, the parameter $\theta$ can be evolved ``along'' the orthogonal projection (in the $L^2$ sense) of the Ito differential $\rd \Phi$ of the random field $\Phi$ from  (\ref{dPhihlin}) (whose right-hand side is computed at $\Phi = \Phi_{\theta}$) onto the tangent space of differentials $\rd \Phi_{\mu,\Sigma}$ of the Gaussian QCFs in (\ref{dPhilin2}). This approach (whose general idea is similar to that in \cite{VM_2005} and references therein) leads to a modified version of the quantum Kalman filter equations (\ref{dmu}) and (\ref{Sigmadot}):
\begin{align}
\label{dmuprox}
    \rd \mu
     =&
    (A\mu + \lambda)\rd t + 2(P+Q\Sigma)^{\rT}K\rd \chi,\\
\label{Sigmadotprox}
    \dot{\Sigma}
    =&
    A\Sigma +\Sigma A^{\rT} + BB^{\rT} + \sigma
     -
    4
    (P+Q\Sigma)^{\rT}KFF^{\rT}K^{\rT}(P+Q\Sigma),
\end{align}
where the additional terms $\lambda \in \mR^n$ and $\sigma \in \mS_n$ are found as a solution of the minimization problem
\begin{equation}
\label{lamsig}
    \|\fF_{\mu,\Sigma}(\lambda, \sigma,\cdot)\|^2
    =
    \int_{\mR^n}
    |\fF_{\mu,\Sigma}(\lambda, \sigma,u)|^2
    \rd u
    \longrightarrow \min.
\end{equation}
Here,
\begin{equation}
\label{fF}
    \fF_{\mu,\Sigma}(\lambda, \sigma,u)
    :=
    \fG_{\mu,\Sigma}(u)
    +
    \Big(iu^{\rT}\lambda - \frac{1}{2}u^{\rT}\sigma u\Big)
    \Phi_{\mu,\Sigma}(u),
\end{equation}
and
\begin{equation}
\label{fG}
    \fG_{\mu,\Sigma}(u)
    :=
        2
    \int_{\mR^d}
    \sin(u^{\rT}\Theta S^{\rT} v)
    \Psi(v)
    \Phi_{\mu,\Sigma}(u+S^{\rT}v)\rd v
\end{equation}
denotes the negative of the integral operator term in (\ref{dPhihlin}) which is contributed by the nonquadratic part of the system Hamiltonian. Both $\fF_{\mu,\Sigma}(\lambda, \sigma,u)$ and $\fG_{\mu,\Sigma}(u)$ are Hermitian with respect to $u\in \mR^n$.
The following theorem computes the correction terms $\lambda$ and $\sigma$.

%%%%%%%%%%%%%%%%%%%%%%%%%%%%%%%%%%%%%%%%%%%%%%%%%%%%%%%%%%%%%%%%%%%%%%%%%%%%%%%%%%%%%%%%%%%%%%%%%%%%
\begin{thm}
\label{th:gauss}
If $\Sigma\succ 0$, then the minimum in (\ref{lamsig}) is achieved at a unique point
\begin{align}
\label{lamopt0}
    \lambda
    =&
    \Re
    \int_{\mR^d}
    \re^{\frac{1}{4}z^{\rT} \Sigma^{-1}z}
    z\Big|_{z=-(\Sigma + i\Theta)S^{\rT}v}^{(-\Sigma + i\Theta)S^{\rT}v}
    \Phi_{\mu,\Sigma}(S^{\rT}v)
    \Psi(v)
    \rd v,\\
\label{sigopt0}
    \sigma
    =&
    2\cS_{\Sigma}^{-1}
    \Big(
    \Im
    \int_{\mR^d}
    \re^{\frac{1}{4}z^{\rT} \Sigma^{-1}z}
    \Big(\Sigma^{-1} +\frac{1}{2} \Sigma^{-1}zz^{\rT}\Sigma^{-1}\Big)\Big|_{z=-(\Sigma + i\Theta)S^{\rT}v}^{(-\Sigma + i\Theta)S^{\rT}v}
    \Phi_{\mu,\Sigma}(S^{\rT}v)
    \Psi(v)
    \rd v\Big),
\end{align}
where $\cS_{\Sigma}$ is a positive definite self-adjoint operator on $\mS_n$ given by
\begin{equation}
\label{cS}
    \cS_{\Sigma}(\sigma)
    :=
  \Sigma^{-1}\sigma \Sigma^{-1}
     +
    \frac{1}{2}
    \Bra \Sigma^{-1}, \sigma\Ket
    \Sigma^{-1}.
\end{equation}
\hfill$\square$
\end{thm}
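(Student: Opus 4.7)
The plan is to recognize that $\|\fF_{\mu,\Sigma}(\lambda,\sigma,\cdot)\|^2$ is a strictly convex quadratic functional on the Hilbert space $\mR^n\x \mS_n$ and to solve the associated normal equations in closed form by Gaussian integration. Strict convexity holds because the affine map $(\lambda,\sigma)\mapsto \fF_{\mu,\Sigma}(\lambda,\sigma,\cdot)$ has injective linear part: for $\Sigma\succ 0$ the Gaussian $\Phi_{\mu,\Sigma}$ is nonvanishing and square integrable, so $(iu^{\rT}\lambda - \frac{1}{2}u^{\rT}\sigma u)\Phi_{\mu,\Sigma}(u)$ can be zero in $L^2(\mR^n)$ only when $\lambda = 0$ and $\sigma = 0$. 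Hence the minimizer is unique and determined by the first-order conditions $\d_{\lambda}\|\fF_{\mu,\Sigma}\|^2 = 0$ and $\d_{\sigma}\|\fF_{\mu,\Sigma}\|^2 = 0$.

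Next, I expand these gradients. Since $|\Phi_{\mu,\Sigma}(u)|^2 = \re^{-u^{\rT}\Sigma u}$ is real and even in $u$ while $(u^{\rT}\lambda)(u^{\rT}\sigma u)$ is odd, the cross term $\Re\bra iu^{\rT}\lambda\,\Phi_{\mu,\Sigma}, -\frac{1}{2}u^{\rT}\sigma u\,\Phi_{\mu,\Sigma}\ket$ vanishes, and the normal equations decouple into
\begin{align*}
    M\lambda
    & = \Im\int_{\mR^n} u\,\overline{\fG_{\mu,\Sigma}(u)}\,\Phi_{\mu,\Sigma}(u)\,\rd u,\\
    \cT(\sigma)
    & = 2\Re\int_{\mR^n} uu^{\rT}\,\overline{\fG_{\mu,\Sigma}(u)}\,\Phi_{\mu,\Sigma}(u)\,\rd u,
\end{align*}
where $M := \int_{\mR^n}uu^{\rT}\re^{-u^{\rT}\Sigma u}\rd u$ and $\cT(\sigma) := \int_{\mR^n}(u^{\rT}\sigma u)uu^{\rT}\re^{-u^{\rT}\Sigma u}\rd u$. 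A direct Gaussian computation gives $M = \frac{c_0}{2}\Sigma^{-1}$ with $c_0 := \pi^{n/2}/\sqrt{\det\Sigma}$, and the Isserlis identity applied to a $\cN(0,\frac{1}{2}\Sigma^{-1})$-distributed vector yields $\cT(\sigma) = \frac{c_0}{2}\cS_{\Sigma}(\sigma)$ with $\cS_{\Sigma}$ as in (\ref{cS}). Positive definiteness of $\cS_{\Sigma}$ is immediate from $\bra\sigma, \cS_{\Sigma}(\sigma)\ket = \Tr(\sigma\Sigma^{-1}\sigma\Sigma^{-1}) + \frac{1}{2}(\Tr(\Sigma^{-1}\sigma))^2 > 0$ for $\sigma\ne 0$, so both $M$ and $\cS_{\Sigma}$ are invertible and $(\lambda,\sigma)$ is uniquely determined.

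To evaluate the right-hand sides explicitly, I expand $2\sin(u^{\rT}\Theta S^{\rT}v) = \frac{1}{i}(\re^{iu^{\rT}\Theta S^{\rT}v}-\re^{-iu^{\rT}\Theta S^{\rT}v})$ in (\ref{fG}) and use $\overline{\Phi_{\mu,\Sigma}(u+S^{\rT}v)}\,\Phi_{\mu,\Sigma}(u) = \overline{\Phi_{\mu,\Sigma}(S^{\rT}v)}\re^{-u^{\rT}\Sigma u - u^{\rT}\Sigma S^{\rT}v}$ to reduce the $u$-integration to $\int_{\mR^n} u\,\re^{-u^{\rT}\Sigma u - u^{\rT}c}\rd u$ and $\int_{\mR^n} uu^{\rT}\re^{-u^{\rT}\Sigma u - u^{\rT}c}\rd u$ with $c = (\Sigma \mp i\Theta)S^{\rT}v$. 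Completing the square extracts the factor $\re^{\frac{1}{4}c^{\rT}\Sigma^{-1}c}$, whose exponent $v^{\rT}S(\Sigma+\Theta\Sigma^{-1}\Theta)S^{\rT}v$ is real and independent of the $\pm$-sign. Substituting $z := -c$, so that the two signs correspond to $z_+ = (-\Sigma+i\Theta)S^{\rT}v$ and $z_- = -(\Sigma+i\Theta)S^{\rT}v$, assembles the $\pm$-contributions into the brackets $[\,\cdot\,]_{z_-}^{z_+}$ appearing in (\ref{lamopt0}) and (\ref{sigopt0}). After inverting $M$ and $\cS_{\Sigma}$ on the respective sides and using $\Re\overline{w} = \Re w$ to swap $\overline{\Psi(v)}\,\overline{\Phi_{\mu,\Sigma}(S^{\rT}v)}$ for $\Psi(v)\Phi_{\mu,\Sigma}(S^{\rT}v)$, the stated formulas follow.

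The main obstacle will be the sign and parity bookkeeping in this last step: both $z_+-z_- = 2i\Theta S^{\rT}v$ and the matrix difference $z_+z_+^{\rT}-z_-z_-^{\rT}$ turn out to be purely imaginary, so after multiplication by the real exponential $\re^{\frac{1}{4}z^{\rT}\Sigma^{-1}z}$ the bracket $[\,\cdot\,]_{z_-}^{z_+}$ is itself purely imaginary. This parity exchange is precisely what converts $\Re\int uu^{\rT}\overline{\fG_{\mu,\Sigma}}\Phi_{\mu,\Sigma}\rd u$ into the $\Im$ prefactor in (\ref{sigopt0}) and $\Im\int u\,\overline{\fG_{\mu,\Sigma}}\Phi_{\mu,\Sigma}\rd u$ into the $\Re$ prefactor in (\ref{lamopt0}). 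Aside from this bookkeeping, all remaining calculations are standard Gaussian moment identities.
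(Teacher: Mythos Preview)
Your proposal is correct and follows essentially the same route as the paper: both treat $\|\fF_{\mu,\Sigma}(\lambda,\sigma,\cdot)\|^2$ as a strictly convex quadratic, write down the first-order conditions, evaluate the Gaussian second and fourth moments (the latter via Isserlis, yielding $\cS_{\Sigma}$), and compute the $\fG_{\mu,\Sigma}$-integrals by expanding the sine and using the moment-generating function $\phi_{\Sigma}(z)=\re^{\frac{1}{4}z^{\rT}\Sigma^{-1}z}$. The only differences are cosmetic: you work with $\overline{\fG_{\mu,\Sigma}}\,\Phi_{\mu,\Sigma}$ where the paper uses its conjugate $\overline{\Phi_{\mu,\Sigma}}\,\fG_{\mu,\Sigma}$, and you make the decoupling of the normal equations and the $\Re\leftrightarrow\Im$ swap explicit, whereas the paper leaves these implicit in the structure of (\ref{lamopt})--(\ref{sigopt}).
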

%%%%%%%%%%%%%%%%%%%%%%%%%%%%%%%%%%%%%%%%%%%%%%%%%%%%%%%%%%%%%%%%%%%%%%%%%%%%%%%%%%%%%%%%%%%%%%%%%%%%
\begin{proof}
The function being minimized in (\ref{lamsig}) is a convex quadratic function of $(\lambda,\sigma)\in \mR^n\x \mS_n$, whose Frechet  differentiation   leads to the necessary  conditions of optimality:
\begin{align}
\label{lamopt}
    \Im
    \int_{\mR^n}
    \overline{\Phi_{\mu,\Sigma}(u)}\,
    \fF_{\mu,\Sigma}(\lambda, \sigma,u)
    u
    \rd u
    &=0,\\
\label{sigopt}
    \Re
    \int_{\mR^n}
    \overline{\Phi_{\mu,\Sigma}(u)}\,
    \fF_{\mu,\Sigma}(\lambda, \sigma,u)
        uu^{\rT}
    \rd u
    &=0,
\end{align}
where use is made of the relations $\d_{\lambda} \fF_{\mu,\Sigma}(\lambda, \sigma,u) = i\Phi_{\mu,\Sigma}(u)u$ and
$\d_{\sigma} \fF_{\mu,\Sigma}(\lambda, \sigma,u) = -\frac{1}{2}\Phi_{\mu,\Sigma}(u)uu^{\rT}$ which follow from (\ref{fF}). Note that
\begin{equation}
\label{gauss}
    \pi^{-n/2}
    \sqrt{\det\Sigma}\,
    |\Phi_{\mu,\Sigma}(u)|^2
    =
    \pi^{-n/2}
    \sqrt{\det\Sigma}\,
    \re^{-\|u\|_{\Sigma}^2},
\end{equation}
as a function of $u\in \mR^n$, is a Gaussian PDF with zero mean and covariance matrix $\frac{1}{2}\Sigma^{-1}$. Hence,
\begin{align}
\label{uu}
    \pi^{-n/2}\sqrt{\det\Sigma} \int_{\mR^n}
    |\Phi_{\mu,\Sigma}(u)|^2 uu^{\rT}\rd u
    = &
    \frac{1}{2}\Sigma^{-1},\\
\label{uuuu}
    \pi^{-n/2}\sqrt{\det\Sigma} \int_{\mR^n}
    |\Phi_{\mu,\Sigma}(u)|^2 uu^{\rT}\sigma uu^{\rT}\rd u
    = &
    \frac{1}{2}\cS_{\Sigma}(\sigma).
\end{align}
Here, the operator $\cS_{\Sigma}$ is given by (\ref{cS}) and originates from the relation $\bE(\xi\xi^{\rT}\sigma \xi\xi^{\rT}) = 2C\sigma C + \bra C,\sigma\ket C$ for any $\sigma \in \mS_n$ and an $\mR^n$-valued Gaussian random vector $\xi$ with zero mean and covariance matrix $C$, which follows from the Isserlis theorem \cite{I_1918,J_1997}. The mixed moments of arbitrary odd order for the entries of such a vector vanish.
Positive definiteness (and hence, invertibility) of the operator $\cS_{\Sigma}$ follows from the inequalities $\Bra \sigma, \cS_{\Sigma}(\sigma)\Ket \> \Tr((\Sigma^{-1/2}\sigma \Sigma^{-1/2})^2)>0$ for any $\sigma \in \mS_n \setminus \{0\}$. Now, a combination of (\ref{lamopt}) and (\ref{uu}) with (\ref{fG}) allows $\lambda$ to be uniquely found as
\begin{align}
\nonumber
    \lambda
    =&
    -2
    \pi^{-n/2}\sqrt{\det\Sigma} \,
    \Sigma
    \Im
    \int_{\mR^n}
    \overline{\Phi_{\mu,\Sigma}(u)}
    \fG_{\mu,\Sigma}(u)
    u
    \rd u\\
\nonumber
    =&
    -4
    \Sigma
    \Im
    \int_{\mR^d}
    \pi^{-n/2}\sqrt{\det\Sigma}
    \int_{\mR^n}
    \re^{-\|u\|_{\Sigma}^2}
    \sin(u^{\rT}\Theta S^{\rT} v)
    \frac{    \Phi_{\mu,\Sigma}(u+S^{\rT}v)}{\Phi_{\mu,\Sigma}(u)}
    u
    \rd u
        \Psi(v)
\rd v\\
\nonumber
    =&
    -4
    \Sigma
    \Im
    \int_{\mR^d}
    \pi^{-n/2}\sqrt{\det\Sigma}
    \int_{\mR^n}
    \re^{-\|u\|_{\Sigma}^2}
    \sin(u^{\rT}\Theta S^{\rT} v)
    \re^{-u^{\rT}\Sigma S^{\rT}v}
    u
    \rd u
    \Phi_{\mu,\Sigma}(S^{\rT}v)
        \Psi(v)
\rd v\\
\label{FFF}
    =&
    -4
    \Sigma
    \Im
    \int_{\mR^d}
    \frac{\phi_{\Sigma}'((-\Sigma + i\Theta) S^{\rT}v)-\phi_{\Sigma}'(-(\Sigma +i\Theta) S^{\rT}v)}{2i}
    \Phi_{\mu,\Sigma}(S^{\rT}v)
        \Psi(v)
\rd v,
\end{align}
where use is made of the identity
$
    \frac{    \Phi_{\mu,\Sigma}(u+w)}{\Phi_{\mu,\Sigma}(u)\Phi_{\mu,\Sigma}(w)}
    =
    \re^{-u^{\rT} \Sigma w}
$ for the Gaussian QCFs (\ref{Phigauss}) together with
the moment-generating function
\begin{equation}
\label{mom}
    \phi_{\Sigma}(z)
    :=
    \pi^{-n/2}\sqrt{\det\Sigma}
    \int_{\mR^n}
    \re^{-\|u\|_{\Sigma}^2+u^{\rT}z}
    \rd u =
    \re^{\frac{1}{4}z^{\rT}\Sigma^{-1}z},
    \quad
    z\in \mC^n,
\end{equation}
for the Gaussian PDF (\ref{gauss}).
Substitution of  the
gradient vector $\phi_{\Sigma}'(z) = \frac{1}{2}\phi_{\Sigma}(z)\Sigma^{-1}z$ into (\ref{FFF}) leads to (\ref{lamopt0}). By a similar reasoning, a combination of (\ref{sigopt}) and (\ref{uuuu}) with (\ref{fF}) and (\ref{fG}) allows $\sigma$ to be uniquely found as
\begin{align}
\nonumber
    \sigma
    & =
    4
    \cS_{\Sigma}^{-1}
    \Big(
    \pi^{-n/2}\sqrt{\det\Sigma}\,
    \Re
    \int_{\mR^n}
    \overline{\Phi_{\mu,\Sigma}(u)}
    \fG_{\mu,\Sigma}(u)
    uu^{\rT}
    \rd u
    \Big)\\
\nonumber
    & =
    8
\cS_{\Sigma}^{-1}
    \Big(
    \Re
    \int_{\mR^d}
    \pi^{-n/2}\sqrt{\det\Sigma}
    \int_{\mR^n}
    \re^{-\|u\|_{\Sigma}^2}
    \sin(u^{\rT}\Theta S^{\rT} v)
    \frac{    \Phi_{\mu,\Sigma}(u+S^{\rT}v)}{\Phi_{\mu,\Sigma}(u)}
    uu^{\rT}
    \rd u
        \Psi(v)
\rd v\Big)\\
\nonumber
    & =
    8
\cS_{\Sigma}^{-1}
    \Big(
    \Re
    \int_{\mR^d}
    \pi^{-n/2}\sqrt{\det\Sigma}
    \int_{\mR^n}
    \re^{-\|u\|_{\Sigma}^2}
    \sin(u^{\rT}\Theta S^{\rT} v)
    \re^{-u^{\rT}\Sigma S^{\rT}v}
    uu^{\rT}
    \rd u
    \Phi_{\mu,\Sigma}(S^{\rT}v)
        \Psi(v)
\rd v\Big)\\
\label{FFFF}
    & =
    8
    \cS_{\Sigma}^{-1}
    \Big(
    \Re
    \int_{\mR^d}
    \frac{\phi_{\Sigma}''((-\Sigma + i\Theta) S^{\rT}v)-\phi_{\Sigma}''(-(\Sigma +i\Theta) S^{\rT}v)}{2i}
    \Phi_{\mu,\Sigma}(S^{\rT}v)
        \Psi(v)
\rd v\Big).
\end{align}
Substitution of the Hessian matrix $\phi_{\Sigma}''(z) = \frac{1}{2}\phi_{\Sigma}(z)(\Sigma^{-1} +\frac{1}{2} \Sigma^{-1}zz^{\rT}\Sigma^{-1})$ of (\ref{mom}) into (\ref{FFFF}) leads to (\ref{sigopt0}).
\end{proof}
%%%%%%%%%%%%%%%%%%%%%%%%%%%%%%%%%%%%%%%%%%%%%%%%%%%%%%%%%%%%%%%%%%%%%%%%%%%%%%%%%%%%%%%%%%%%%%%%%%%%

The equations (\ref{lamopt0}) and (\ref{sigopt0}) provide integral representations of the correction terms $\lambda$ and $\sigma$ in the modified quantum Kalman filter (\ref{dmuprox}), (\ref{Sigmadotprox}) as nonlinear functions of $\mu$ and $\Sigma$. These integrals involve the spatial Fourier transform $\Psi$ of the nonquadratic part of the Hamiltonian.  Their closed-form  evaluation is possible, for example, if $\Psi$ is a linear combination of quadratic-exponential functions (see \cite[Section 9]{V_2015c}), which corresponds to the presence of Gaussian-shaped  ``bumps''   in the potential energy of the system \cite{FM_2004}. This consideration can be used in order to apply the  above results to open quantum systems with multiextremum energy landscapes. However, open questions  in regard to the Gaussian approximation,  described above, include its error analysis and the study of conditions when (\ref{Sigmadotprox}) produces a physically meaningful matrix $\Sigma$ satisfying the Heisenberg uncertainty  principle $\Sigma + i\Theta \succcurlyeq 0$.

%%%%%%%%%%%%%%%%%%%%%%%%%%%%%%%%%%%%%%%%%%%%%%%%%%%%%%%%%%%%%%%%%%%%%%%%%%%%%%%%%%%%%%%%%%%%%%%%%%%%%%%%
\section{CONCLUSION}\label{sec:conc}
%%%%%%%%%%%%%%%%%%%%%%%%%%%%%%%%%%%%%%%%%%%%%%%%%%%%%%%%%%%%%%%%%%%%%%%%%%%%%%%%%%%%%%%%%%%%%%%%%%%%%%%%

For a class of quantum stochastic systems, whose Hamiltonian and coupling operators are represented in the Weyl quantization form, we have obtained a nonlinear SIDE for the evolution of the posterior QCF conditioned on  multichannel nondemolition measurements. This equation is driven by a classical diffusion process of innovations associated with the measurements. We have also considered a more specific form of the SIDE for the case of linear system-field coupling and outlined a Gaussian approximation of the posterior state governed by modified quantum Kalman filter equations. These ideas are applicable to the development of suboptimal quantum filtering algorithms which employ more complicated (for example, multi-Gaussian) approximations of the posterior QCF and QPDF.
Furthermore, the results of this paper can be extended to more general system dynamics, field states and measurement settings (such as nonlinear coupling, coherent and Gaussian states and photon counting measurements), for some of which the BKSE was considered in \cite{EWP_2015,GK_2010,N_2014} without using the Weyl quantization of the Hamiltonian and coupling operators.


\begin{thebibliography}{99}{%\scriptsize
%==============================================================================
\bibitem{AM_1979}
B.D.O.Anderson, and J.B.Moore,
\emph{Optimal Filtering},
Prentice Hall, New York, 1979.
%==============================================================================
%\bibitem{B_2009}
%A.Bart\'{o}k-P\'{a}rtay,
%Gaussian approximation potential:
%an interatomic potential derived from
%first principles quantum mechanics,
%PhD Thesis,
%University of Cambridge, 2009 (arXiv:1003.2817 [cond-mat.mtrl-sci], 14 March 2010).
%==============================================================================
\bibitem{B_1983}
V.P.Belavkin, On the theory of controlling observable quantum
systems, \emph{Autom. Rem. Contr.}, vol. 44, no. 2, 1983, pp. 178--188.
%==============================================================================
\bibitem{B_1989}
V.P.Belavkin, A stochastic calculus of quantum input-output
processes and quantum nondemolition filtering,
\emph{Itogi Nauki i Tekhniki. Ser. Sovrem. Probl. Mat. Nov.
Dostizh.}, vol. 36, 1989, pp. 29--67.
%%==============================================================================
%\bibitem{B_2010}
%V.P.Belavkin, Noncommutative dynamics and generalized master equations, \emph{Math. Notes}, vol. 87, no. 5, 2010, pp. 636--653.
%==============================================================================

%==============================================================================
%\bibitem{BH_1998}
%D.S.Bernstein, and W.M.Haddad,
%LQG control with an
%$H^{\infty}$ performance bound: a Riccati equation approach,
%\emph{IEEE Trans.
%Automat. Contr.}, vol. 34, no. 3, 1989, pp. 293--305.

%%==============================================================================
%\bibitem{B_1986}
%W.M.Boothby,
%\emph{An Introduction to Differentiable Manifolds and Riemannian Geometry}, 2nd Ed.,
%Academic Press, London, 1986.

%%==============================================================================
\bibitem{BVJ_2007}
L.Bouten, R.Van Handel, M.R.James,
An introduction to quantum filtering,
\emph{SIAM J. Control Optim.}, vol. 46, no. 6, 2007, pp. 2199--2241.

%%==============================================================================
%\bibitem{B_1967}
%J.L.Brown, A Generalized form of Price's theorem and its converse,
%\emph{IEEE Trans. Inform. Theory}, vol. 13, no. 1, 1967, pp. 27--30.
%%==============================================================================
%\bibitem{CT_1991}%***
%T.M.Cover, and J.A.Thomas, \emph{Elements of Information Theory},
%Wiley, New York, 1991.
%%==============================================================================
\bibitem{CH_1971}
C.D.Cushen, and R.L.Hudson, A quantum-mechanical central limit theorem,
\emph{J. Appl. Prob.}, vol. 8, no. 3, 1971, pp. 454--469.
%%==============================================================================
%\bibitem{DD_1970}
%C.Doleans-Dade, Quelques applications de la formule de changement
%de variables pour les semimartingales, \emph{Z. Wahrscheinlichkeitstheorie
%verw.}, vol. 16, 1970, pp. 181--194.
%%==============================================================================
\bibitem{DDJW_2006}
C.D'Helon, A.C.Doherty, M.R.James, and S.D.Wilson,
Quantum risk-sensitive control,
45th IEEE CDC,
San Diego, CA, USA, December 13--15, 2006, pp. 3132--3137.
%%==============================================================================
\bibitem{EB_2005}
S.C.Edwards, and V.P.Belavkin,
Optimal quantum filtering and
quantum feedback control,
{\tt arXiv:quant-ph/0506018v2, August 1,  2005}.
%%==============================================================================
\bibitem{EWP_2015}
M.F.Emzir, M.J.Woolley, and I.R.Petersen,
Quantum filtering for multiple diffusive and Poissonian measurements,
\emph{J. Phys. A: Math. Theor.}, vol. 48, 385302.
%%==============================================================================
%\bibitem{E_1998}
%L.C.Evans,
%\emph{Partial Differential Equations},
%American Mathematical Society, Providence, 1998.
%%==============================================================================
\bibitem{F_1989}
G.B.Folland, \emph{Harmonic Analysis in Phase Space}, Princeton University Press, Princeton, 1989.
%%%==============================================================================
\bibitem{FM_2004}
P.A.Frantsuzov, and V.A.Mandelshtam,
Quantum statistical mechanics with Gaussians:
equilibrium properties of van der Waals clusters,
\emph{J. Chem. Phys.}, vol. 121, no. 19, 2004, pp. 9247--9256.
%%==============================================================================
%\bibitem{GHP_2009}
%P.Gibilisco, F.Hiai, and D.Petz,
%Quantum covariance, quantum Fisher information, and the uncertainty principle,
%\emph{IEEE Trans.
%Inform. Theory.}, vol. 55, no. 1, 2009, pp. 439--443.
%%==============================================================================

\bibitem{GZ_2004}
C.W.Gardiner, and P.Zoller,
\emph{Quantum Noise}.
Springer, Berlin, 2004.
%%==============================================================================
\bibitem{GS_2004}
I.I.Gikhman, and A.V.Skorokhod, \emph{The Theory of Stochastic Processes}, Springer, Berlin,
2004.
%%==============================================================================
\bibitem{GKS_1976}
V.Gorini, A.Kossakowski, E.C.G.Sudarshan, Completely positive dynamical semigroups of N-level systems,
\emph{J. Math. Phys.}, vol. 17, no. 5, 1976, pp. 821--825.
%%==============================================================================
%\bibitem{G_1999}
%J.E.Gough, Dissipative canonical flows in classical and quantum
%mechanics, \emph{J. Math. Phys.}, vol. 40, no. 6, 1999, pp. 2805--2815.
%%==============================================================================
\bibitem{GBS_2005}
J.Gough, V.P.Belavkin, and O.G.Smolyanov,
Hamilton-Jacobi-Bellman equations for
quantum optimal feedback control,
\emph{J. Opt. B: Quantum Semiclass. Opt.}, vol. 7, 2005, pp. S237--S244.
%%==============================================================================
%\bibitem{GJ_2009}
%J.Gough, and M.R.James,
%Quantum feedback networks: Hamiltonian
%formulation,
%\emph{Commun. Math. Phys.},  vol. 287, 2009, pp. 1109--1132.
%%==============================================================================
\bibitem{GK_2010}
J.E.Gough, and C.K\"{o}stler,
Quantum filtering in coherent states,
\emph{Commun. Stoch. Anal.},
vol. 4, no. 4,  2010, pp.  505--521.
%%==============================================================================

\bibitem{GRS_2014}
J.Gough, T.S.Ratiu, and O.G.Smolyanov,
Feynman, Wigner, and Hamiltonian structures describing the dynamics of open quantum systems,
\emph{Doklady Maths.}, vol. 89, no. 1, 2014, pp. 68--71.
%%==============================================================================
\bibitem{GRS_2015}
J.Gough, T.S.Ratiu, and O.G.Smolyanov,
Wigner measures and quantum control,
\emph{Doklady Maths.}, vol. 91, no. 2, 2015, pp. 199--203.
%%==============================================================================
%\bibitem{G_2015}
%J.Gough, Symplectic noise and the classical analog of the Lindblad generator.
%Does the regression hypothesis also fail in classical physics? \emph{J. Stat. Phys.},
%vol. 160, no. 6, 2015, pp. 1709--1720.
%%==============================================================================
%\bibitem{G_2006}
%M.de~Gosson,
%\emph{Symplectic Geometry and Quantum Mechanics},
%Birk\-h\"{a}user, Basel, 2006.
%%==============================================================================
%\bibitem{G_2008}%***
%R.M.Gray, \emph{Entropy and Information Theory}, Springer, New York,
%2008.
%%==============================================================================
%\bibitem{G_1977}
%H.W.Guggenheimer,
%\emph{Differential Geometry},
%Dover, New York, 1977.
%%==============================================================================
%\bibitem{H_2008}
%N.J.Higham,
%\emph{Functions of Matrices}, SIAM, 2008.
%--------------------------------------------------------------------

\bibitem{Hi_2010}
B.J.Hiley,
On the relationship between the Wigner-Moyal and Bohm approaches to quantum mechanics: a step to a more general theory?,
\emph{Foundat. Phys.}, vol. 40, no. 4, 2010, pp. 356--367.
%%==============================================================================
\bibitem{H_1991}
A.S.Holevo, Quantum stochastic calculus,
\emph{J. Math. Sci.}, vol. 56, no. 5, 1991, pp. 2609--2624.
%%==============================================================================

%\bibitem{H_1996}
%A.S.Holevo, Exponential formulae in quantum stochastic calculus,
%\emph{Proc. Roy. Soc. Edinburgh}, vol. 126A, 1994, pp. 375--389.
%--------------------------------------------------------------------
\bibitem{H_2001}
A.S.Holevo, \emph{Statistical Structure of Quantum Theory}, Springer, Berlin, 2001.
%%==============================================================================
%\bibitem{HJ_2007}
%R.A.Horn, and C.R.Johnson,
%\emph{Matrix Analysis},
%Cambridge
%University Press, New York, 2007.
%%==============================================================================
%\bibitem{H_1967}
%L.H\"{o}rmander,  Hypoelliptic second order differential equations, \emph{ Acta Math.},
%vol. 119, no. 1, 1967, pp. 147--171
%%==============================================================================

\bibitem{H_1974}
R.L.Hudson, When is the Wigner quasi-probability density non-negative?
\emph{Rep. Math. Phys.}, vol. 6, no. 2, 1974, pp. 249--252.
%%==============================================================================
\bibitem{H_2010}
R.L.Hudson,
Quantum Bochner theorems and incompatible observables,
\emph{Kybernetika}, vol. 46, no. 6, 2010, pp. 1061--1068.
%%==============================================================================
\bibitem{HP_1984}
R.L.Hudson, and K.R.Parthasarathy,
Quantum Ito's formula and stochastic evolutions.
\emph{Commun. Math. Phys.}, vol. 93,  1984, pp. 301--323.
%%==============================================================================
%\bibitem{HM_1994}
%U.Helmke, and J.B.Moore,
%\emph{Optimization and Dynamical Systems},
%Springer, London, 1994.
%%==============================================================================
\bibitem{I_1918}
L.Isserlis, On a formula for the product-moment coefficient of any order of a normal
frequency distribution in any number of variables, \emph{Biometrika}, vol. 12, 1918, pp.
134--139.
%%==============================================================================

%\bibitem{JK_1998}
%K.Jacobs, and P.L.Knight,
%Linear quantum trajectories: applications to continuous projection measurements,
%\emph{Phys. Rev. A}, vol.  57, no. 4, 1998, pp. 2301--2310.

%%==============================================================================
%\bibitem{J_2005}
%M.R.James, A quantum Langevin formulation of risk-sensitive optimal control,
%\emph{J. Opt. B}, vol. 7, 2005, pp. S198--S207.
%%==============================================================================
%\bibitem{JG_2010}
%M.R.James, and J.E.Gough,
%Quantum dissipative systems and feedback control design by interconnection,
% \emph{IEEE Trans. Autom. Contr.},  vol. 55, no. 8, pp. 1806--1821.





%%==============================================================================
%\bibitem{JNP_2008}
%M.R.James, H.I.Nurdin, and I.R.Petersen,
%$H^{\infty}$ control of
%linear quantum stochastic systems,
%\emph{IEEE Trans.
%Automat. Contr.}, vol. 53, no. 8, 2008, pp. 1787--1803.
%
%%==============================================================================
\bibitem{J_1997}
S.Janson, \emph{Gaussian Hilbert Spaces}, Cambridge University Press, Cambridge, 1997.
%%==============================================================================
\bibitem{KS_1991}
I.Karatzas, and S.E.Shreve,
\emph{Brownian Motion and Stochastic Calculus}, 2nd Ed.,
Springer, New York, 1991.
%%==============================================================================
%\bibitem{K_1976}
%T.Kato, \emph{Perturbation Theory for Linear Operators}, 2nd Ed., Springer, Berlin, 1976.
%%==============================================================================
\bibitem{K_1941}
A.N.Kolmogorov, Interpolation and extrapolation of stationary random sequences, \emph{Izv. Akad. Nauk SSSR, Ser. Mat.}, 1941, pp. 3--14.


%%==============================================================================
\bibitem{KS_2008}
J.Kupsch, and O.G.Smolyanov,
Exact master equations describing reduced dynamics of the Wigner function,
\emph{J. Math. Sci.}, vol. 150, no. 6, 2008, pp.  2598--2608.



%%==============================================================================
%\bibitem{KS_1972}
%H.Kwakernaak, and R.Sivan,
%\emph{Linear Optimal Control Systems},
%Wiley, New York, 1972.
%%==============================================================================
\bibitem{L_1976}
G.Lindblad, On the generators of quantum dynamical semigroups,
\emph{Comm. Math. Phys.}, vol. 48, 1976, pp. 119--130.
%%==============================================================================
\bibitem{LS_2001}
R.S.Liptser, and A.N.Shiryaev,
\emph{Statistics of Random Processes: Applications}, Springer, Berlin, 2001.
%%==============================================================================
%\bibitem{MWPY_2014}
%S.Ma, M.J.Woolley, I.R.Petersen, and N.Yamamoto,
%Preparation of pure Gaussian states via cascaded quantum systems,
%arXiv:1408.2290 [quant-ph], 11 Aug 2014.
%%==============================================================================

%\bibitem{M_1988}
%J.R.Magnus,
%\emph{Linear Structures},
%Oxford University Press, New York, 1988.
%%==============================================================================
%\bibitem{M_1997}
%P.Malliavin,
%\emph{Stochastic Analysis},
%Springer, Berlin, 1997.
%%==============================================================================
%\bibitem{Mar_1988}
%G.I.Marchuk, \emph{Splitting Methods}, Nauka, Moscow, 1988.
%
%%%==============================================================================
\bibitem{MD_2015}
K.-P.Marzlin, and S.Deering,
The Moyal equation for open quantum
systems, \emph{J. Phys. A: Math. Theor.}, vol.  48, 2015, pp. 205301(13).
%%==============================================================================
%\bibitem{M_1964}
%E.McMahon, An extension of Price's theorem, \emph{IEEE Trans. Inform. Theory}, vol. 10, no. 2, 1964, p. 168.
%%==============================================================================
\bibitem{M_1998}
E.Merzbacher,
\emph{Quantum Mechanics}, 3rd Ed.,
Wiley, New York, 1998.
%%==============================================================================
%\bibitem{MJ_2012}
%Z.Miao, and M.R.James,
%Quantum observer for linear quantum
%stochastic systems, Proc. 51st IEEE Conf. Decision Control, Maui,
%Hawaii, USA, December 10-13, 2012, pp. 1680--1684.
%%%==============================================================================
%\bibitem{M_1929}
%P.M.Morse,
%Diatomic molecules according to the wave mechanics. II. Vibrational levels,
%\emph{Phys. Rev.}, vol. 34, 1929, pp. 57--64.

%%==============================================================================
\bibitem{M_1949}
J. E. Moyal, Quantum mechanics as a statistical theory, \emph{Proc. Cam.
Phil. Soc.}, vol. 45, 1949, pp. 99--124.
%%==============================================================================
\bibitem{N_2014}
H.I.Nurdin, Quantum filtering for multiple input multiple output systems driven by arbitrary zero-mean jointly Gaussian input fields, \emph{Russ. J. Math. Phys.}, vol. 21, no. 3, pp. 386--398.
%%==============================================================================
%
%\bibitem{NJP_2009}
%H.I.Nurdin, M.R.James, and I.R.Petersen,
%Coherent quantum LQG
%control,
%\emph{Automatica}, vol.  45, 2009, pp. 1837--1846.
%%==============================================================================
%\bibitem{PAMGUJ_2014}
%Y.Pan, H.Amini, Z.Miao, J.Gough, V.Ugrinovskii, and M.R.James,
%Heisenberg picture approach to the stability of quantum
%Markov systems,
%\emph{J. Math. Phys.}, vol. 55, 2014, pp. 062701--1--16.
%
%%==============================================================================
%\bibitem{P_1965}
%A.Papoulis, Comments on `An extension of Price's theorem' by McMahon, E.L.,
%\emph{IEEE Trans. Inform. Theory},
%vol. 11, no. 1,  1965, p. 154.
%%==============================================================================

\bibitem{P_1992}
K.R.Parthasarathy,
\emph{An Introduction to Quantum Stochastic Calculus},
Birk\-h\"{a}user, Basel, 1992.
%%==============================================================================
\bibitem{P_2010}
K.R.Parthasarathy,
What is a Gaussian state?
\emph{Commun. Stoch. Anal.}, vol. 4, no. 2, 2010, pp. 143--160.

%%==============================================================================
\bibitem{PS_1972}
K.R.Parthasarathy, and K.Schmidt,
\emph{Positive Definite Kernels, Continuous Tensor Products, and Central Limit Theorems of Probability Theory},
Springer-Verlag, Berlin, 1972.

%%==============================================================================
%\bibitem{P_2010}
%I.R.Petersen,
%Quantum linear systems theory,
%Proc. 19th Int. Symp. Math. Theor. Networks Syst., Budapest, Hungary, July 5--9, 2010, pp.  2173--2184.
%%==============================================================================
%\bibitem{PUJ_2012}
% I.R.Petersen, V.Ugrinovskii, and M.R.James, Robust stability of uncertain linear quantum systems, \emph{Phil. Trans. Royal Soc. A}, vol. 370, no. 1979,  2012,  pp. 5354--5363.
%%==============================================================================
%\bibitem{PBGM_1962}
%L.S.Pontryagin, V.G.Boltyanskii,
%R.V.Gamkrelidze, and E.F. Mishchenko,
%\emph{The Mathematical Theory of Optimal Processes},
%Wiley, New York, 1962.
%==============================================================================
%\bibitem{P_1958}
%R.Price, A useful theorem for nonlinear devices having Gaussian inputs,
%\emph{IRE Trans. Inform. Theory}, vol. 4, no. 2, 1958, pp. 69--72.
%==============================================================================
%\bibitem{RS_1975}
%M.Reed, and B.Simon, \emph{Methods of Modern Mathematical Physics. II: Fourier Analysis, Self-adjointness},
%Academic Press, San Diego, 1975.
%==============================================================================
\bibitem{R_1961}%***
A.Renyi, On measures of entropy and information, Proc. 4th Berkeley
Sympos. Math. Statist. Prob., I, 1961, pp. 547--561.
%==============================================================================
%\bibitem{SIS_2004}
%A.Serafini, F.Illuminati, and S.De Siena,
%Symplectic invariants, entropic measures and correlation of Gaussian states,
%\emph{J. Phys. B: At. Mol. Opt. Phys.}, vol. 37, 2004, pp. L21--L28.
%==============================================================================
%

\bibitem{S_1994}
J.J.Sakurai,
\emph{Modern Quantum Mechanics},
 Addison-Wesley, Reading, Mass., 1994.
%==============================================================================



%\bibitem{SP_2009}
%A.J.Shaiju, and I.R.Petersen,
%On the physical realizability of
%general linear quantum stochastic differential equations with
%complex coefficients,
%Proc. Joint 48th IEEE Conf. Decision Control \&
%28th Chinese Control Conf.,
%Shanghai, P.R. China, December 16--18, 2009, pp. 1422--1427.

%==============================================================================
%\bibitem{SP_2012}
%A.J.Shaiju, and I.R.Petersen,
%A frequency domain condition for the physical
%realizability of linear quantum systems,
%\emph{IEEE Trans. Automat. Contr.}, vol. 57, no. 8, 2012, pp. 2033--2044.
%==============================================================================
\bibitem{SVP_2014}
A.Kh.Sichani, I.G.Vladimirov, and I.R.Petersen,
Robust mean square stability of open quantum stochastic systems
with Hamiltonian perturbations in a Weyl quantization form,
Australian Control Conference, 2014, Canberra, Australia, 17-18 November 2014, pp. 83--88.
%%==============================================================================
%
%\bibitem{S_2000}
%R.Simon,
%Peres-Horodecki separability criterion for continuous variable systems,
%\emph{Phys. Rev. Lett.},
%vol. 84, no. 12, 2000, pp. 2726--2729.

%
%%==============================================================================
%%
%\bibitem{SIG_1998}
%R.E.Skelton, T.Iwasaki, and K.M.Grigoriadis,
%\emph{A Unified Algebraic Approach to Linear Control Design},
%Taylor \& Francis, London, 1998.
%%%==============================================================================
%\bibitem{S_1968}
%G.Strang, On the construction and comparison of difference schemes, \emph{SIAM J. Numer. Anal.}, vol. 5, no. 3,
%1968, pp. 506--517.
%%%==============================================================================
%\bibitem{S_2008}
%D.W.Stroock,
%Partial differential equations for probabilists,
%Cambridge University Press, Cambridge, 2008.
%%%==============================================================================
%\bibitem{SW_1997}
%H.J.Sussmann, and J.C.Willems,
%300 years of optimal control: from the brachystochrone to the maximum principle,
%\emph{Control Systems}, vol. 17, no. 3, 1997, pp. 32--44.

%\bibitem{V_1999}
%A.van den Bos, Nonlinear statistical signal processing: useful theorems and their application, Proc. IEEE-EURASIP Workshop on Nonlinear Signal and Image Processing (NSIP'99), Antalya, Turkey, June 20--23, 1999, pp. 603--606.
%{\tt http://www.eurasip.org/Proceedings/Ext/NSIP99/Nsip99/ papers/130.pdf}.
%%==============================================================================
\bibitem{VM_2005}
R.Van Handel, and H.Mabuchi, Quantum projection filter for a highly nonlinear
model in cavity QED, \emph{J. Opt. B: Quantum Semiclass. Opt.}, vol. 7, 2005, pp. S226--S236.
%%==============================================================================
%\bibitem{V_1984}
%V.S.Varadarajan, \emph{Lie Groups, Lie Algebras, and Their Representations},
%Springer-Verlag, New York, 1984.

%%==============================================================================
%\bibitem{VFGE_2012}
%V.Veitch, C.Ferrie, D.Gross, and J.Emerson,
%Negative quasi-probability as a resource for quantum computation,
%\emph{New J. Phys.}, vol. 14, 2012, pp. 113011(1)--113011(21).
%
%%==============================================================================
%\bibitem{V_1971}
%V.S.Vladimirov,
%\emph{Equations of Mathematical Physics},
%M.Dekker,
%New York, 1971.
%%%==============================================================================
\bibitem{V_2002}
V.S.Vladimirov,
\emph{Methods of the Theory of Generalized Functions},
Taylor \& Francis, London, 2002.
%==============================================================================
%\bibitem{VP_2010a}
%I.G.Vladimirov, and I.R.Petersen,
%Minimum relative entropy state transitions in linear stochastic
%systems: the continuous time case,
%Proc. 19th Int. Symp. Math. Theor. Networks Syst., Budapest, Hungary, July 5--9,  2010, pp.  51--58.
%%%==============================================================================
%\bibitem{VP_2010b}
%I.G.Vladimirov, and I.R.Petersen,
%Hardy-Schatten norms of systems, output energy cumulants and linear quadro-quartic  Gaussian control,
%Proc. 19th Int. Symp. Math. Theor. Networks Syst., Budapest, Hungary, July 5--9,  2010, pp.  2383--2390.
%%==============================================================================
%\bibitem{VP_2011a}
%I.G.Vladimirov, and I.R.Petersen,
%A quasi-separation principle and Newton-like scheme for coherent quantum LQG control, 	
%18th IFAC World Congress, Milan, Italy, 28 August--2 September, 2011, pp. 4721--4727.
%(preprint:  arXiv:1010.3125v2 [quant-ph], 15 April, 2011).
%%==============================================================================
%\bibitem{VP_2011b}
%I.G.Vladimirov, and I.R.Petersen,
%A dynamic programming approach to finite-horizon coherent quantum LQG control, 	
%Proc. Australian Control Conference, Melbourne, 10--11 November, 2011, pp. 357--362.
%(preprint:  	arXiv:1105.1574v1 [quant-ph], 9 May, 2011).
%%==============================================================================
%\bibitem{VP_2012a}%***
%I.G.Vladimirov, and I.R.Petersen,
%Gaussian stochastic  linearization for open quantum systems
%using quadratic approximation of Hamiltonians, 	MTNS 2012, Melbourne, Victoria, 9-13 July 2012,
%%{\tt https://fwn06.housing.rug.nl/mtns/?page\_id=13},
%{\tt arXiv:1202.0946v1 [quant-ph], 5 February 2012}.
%%==============================================================================
%\bibitem{VP_2012b}
%I.G.Vladimirov, and I.R.Petersen,
%Risk-sensitive dissipativity of linear quantum stochastic systems
%under Lur'e type perturbations of Hamiltonians, 	Proc. Australian Control Conference, Sydney, Australia, 15-16 November 2012, pp. 247--252. %(preprint:  	arXiv:1205.3566v1 [quant-ph], 16 May 2012).
%%==============================================================================
%\bibitem{VP_2012c}
%I.G.Vladimirov, and I.R.Petersen,
%Characterization and moment stability analysis of quasilinear quantum stochastic systems with quadratic coupling to external fields, Proc. 51st Conference on Decision and Control, IEEE, Maui, Hawaii, USA, 10-13 December  2012, pp. 1691--1696.
%%==============================================================================
%\bibitem{VP_2013a}
%I.G.Vladimirov, and I.R.Petersen,
%A quasi-separation principle and Newton-like scheme for coherent quantum LQG control,
%\emph{Syst. Contr. Lett.}, vol. 62, no. 7, 2013, pp. 550--559.
%%==============================================================================
%\bibitem{VP_2013b}
%I.G.Vladimirov, and I.R.Petersen,
%Coherent quantum filtering for physically realizable linear quantum plants,
%Proc. European Control Conference, IEEE, Zurich, Switzerland, 17-19 July 2013,  pp. 2717--2723.
%%==============================================================================
\bibitem{V_2015a}
I.G.Vladimirov,
A transverse Hamiltonian  variational technique for open quantum stochastic systems and its application to coherent quantum control, IEEE Multi-Conference on Systems and Control, 21-23 September 2015, Sydney, Australia, pp. 29--34.
%(arXiv:1506.04737v2 [quant-ph], 7 August 2015).
%%==============================================================================
\bibitem{V_2015b}
I.G.Vladimirov,
Weyl variations and local sufficiency of linear observers in the mean square optimal coherent quantum filtering problem,
Australian Control Conference, 5-6 November 2015, Gold Coast, Australia, pp. 93--98.
 %(arXiv:1506.07653 [quant-ph], 25 June 2015).
%%==============================================================================
\bibitem{V_2015c}
I.G.Vladimirov,
Evolution of quasi-characteristic functions in quantum stochastic systems with Weyl quantization
of energy operators, {\tt arXiv:1512.08751 [math-ph], 29 December 2015}.


%%==============================================================================
%\bibitem{W_1936}
%J.Williamson,
%On the algebraic problem concerning the normal forms of linear dynamical systems,
%\emph{Am. J. Math.}, vol. 58, no. 1, 1936, pp. 141--163.
%%==============================================================================
%\bibitem{W_1937}
%J.Williamson,
%On the normal forms of linear canonical transformations in dynamics,
%\emph{Am. J. Math.}, vol. 59, no. 3, 1937, pp. 599--617.
%%==============================================================================
%\bibitem{V_1968}
%H.C.Volkin, Iterated commutators and functions of operators, NASA Technical Note, D-4857, National Aeronautics and Space Administration, Washington, D.C., 1968, pp. 1--19.
%%==============================================================================
\bibitem{W_1949}
N.Wiener, \emph{Extrapolation, Interpolation, and Smoothing of Stationary Time Series}, Wiley, New York, 1949.
%%==============================================================================

%\bibitem{W_1967}
%R.M.Wilcox,
%Exponential operators and parameter differentiation in quantum physics,
%\emph{J. Math. Phys.}, vol.  8, no. 4, 1967, pp. 962--982.
%%==============================================================================
%\bibitem{W_1972}
%J.C.Willems, Dissipative dynamical systems. Part I: general
%theory, Part II: linear
%systems with quadratic supply rates, \emph{Arch. Rational Mech. Anal.}, vol. 45, no. 5, 1972, pp.
%321--351, 352--393.
%%--------------------------------------------------------------------
%\bibitem{W_1972II}
%J.C.Willems, Dissipative dynamical systems. Part II: linear
%systems with quadratic supply rates, \emph{Arch. Rational Mech.
%Anal.}, vol. 45, no. 5, 1972, pp. 352--393.
%--------------------------------------------------------------------
\bibitem{WM_2010}
H.M.Wiseman, and G.J.Milburn,
\emph{Quantum measurement and control},
Cambridge University Press,
Cambridge.
%--------------------------------------------------------------------
%\bibitem{YB_2009}
%N.Yamamoto, and L.Bouten,
%Quantum risk-sensitive estimation and robustness,
%\emph{IEEE Trans. Automat. Contr.}, vol. 54, no. 1, 2009, pp. 92--107.
%%==============================================================================
%\bibitem{Y_2009}
%M.Yanagisawa, Non-Gaussian state generation from linear elements
%via feedback, \emph{Phys. Rev. Lett.}, vol. 103, no. 20, 2009, pp. 203601--1--4.
%%==============================================================================

\bibitem{Y_1980}
K.Yosida, \emph{Functional Analysis}, 6th Ed., Springer, Berlin, 1980.
%%==============================================================================
%\bibitem{ZJ_2011}
%G.Zhang, and M.R.James, On the response of linear quantum stochastic
%systems to single-photon inputs and pulse shaping of photon
%wave packets, Proc. Australian Control Conference, Melbourne, 10-11
%November 2011, pp. 62--67.
}
\end{thebibliography}
\end{document}